\documentclass[11pt,letterpaper]{article}
\usepackage{times}
\usepackage{graphicx}
\usepackage{fullpage}
\usepackage[numbers]{natbib}

\usepackage{url}            
\usepackage{dsfont}       
\usepackage{graphicx}
\usepackage{bm}
\usepackage{amsmath,amsthm,amssymb,amsfonts}
\newtheorem{theorem}{Theorem}
\newtheorem{proposition}{Proposition}
\newtheorem{lemma}{Lemma}
\newtheorem{corollary}{Corollary}
\theoremstyle{definition}

\usepackage[ruled,longend]{algorithm2e}
\usepackage{subcaption}
\captionsetup{compatibility=false}
\usepackage{enumitem}
\usepackage[numbers]{natbib}


\newcommand{\R}{\mathds{R}}


\DeclareMathOperator*{\argmin}{arg\,min}

\newcommand{\agents}{N}
\newcommand{\honest}{H}
\newcommand{\manip}{M}
\renewcommand{\tilde}{\widetilde}
\renewcommand{\bar}{\overline}
\renewcommand{\hat}{\widehat}
\renewcommand{\vec}{\bm}
\newcommand{\vx}{\vec{X}}
\newcommand{\vH}{\vec{H}}
\newcommand{\vxi}{\vec{x_i}}
\newcommand{\vxj}{\vec{x_j}}
\newcommand{\vxk}{\vec{x_k}}
\newcommand{\vy}{\vec{y}}
\newcommand{\vty}{\vec{\tilde{y}}}
\newcommand{\vby}{\vec{\bar{y}}}
\newcommand{\vhy}{\vec{\hat{y}}}
\newcommand{\ty}{\tilde{y}}
\newcommand{\tyi}{\tilde{y}_i}
\newcommand{\tyj}{\tilde{y}_j}
\newcommand{\tyk}{\tilde{y}_k}
\newcommand{\byi}{\bar{y}_i}
\newcommand{\by}{\bar{y}}
\newcommand{\byh}{\bar{y}_h}
\newcommand{\byne}{\bar{y}_{ne}}

\newcommand{\vbeta}{\vec{\beta}}
\newcommand{\vbetast}{\vec{\beta^*}}
\newcommand{\set}[1]{\{#1\}}
\newcommand{\med}{\mathrm{med}}
\newcommand{\NE}{\mathrm{NE}}
\newcommand{\PPoA}{\mathrm{PPoA}}
\newcommand{\br}{\mathrm{br}}
\renewcommand{\paragraph}[1]{\medskip\noindent\emph{#1}}
\newcount\Comments  
\Comments=1 
\usepackage{color} 
\newcommand{\kibitz}[2]{\ifnum\Comments=1\textcolor{#1}{#2}\fi}

\begin{document}

\title{The Effect of Strategic Noise in Linear Regression\thanks{During the course of this work, Shah was partially supported by an NSERC Discovery Grant.}}
\author{Safwan Hossain\\University of Toronto\\\texttt{safwan.hossain@mail.utoronto.ca} \and Nisarg Shah\\University of Toronto\\\texttt{nisarg@cs.toronto.edu}}
\date{}

\maketitle

\begin{abstract}
We build on an emerging line of work which studies strategic manipulations in training data provided to machine learning algorithms. Specifically, we focus on the ubiquitous task of linear regression. Prior work focused on the design of strategyproof algorithms, which aim to prevent such manipulations altogether by aligning the incentives of data sources. However, algorithms used in practice are often not strategyproof, which induces a strategic game among the agents. We focus on a broad class of non-strategyproof algorithms for linear regression, namely $\ell_p$ norm minimization ($p > 1$) with convex regularization. We show that when manipulations are bounded, every algorithm in this class admits a unique pure Nash equilibrium outcome. We also shed light on the structure of this equilibrium by uncovering a surprising connection between strategyproof algorithms and pure Nash equilibria of non-strategyproof algorithms in a broader setting, which may be of independent interest. Finally, we analyze the quality of equilibria under these algorithms in terms of the price of anarchy.
\end{abstract}

\section{Introduction}

Linear regression aims to find a linear relationship between explanatory variables and response variables. Under certain assumptions, it is known that minimizing a suitable loss function on training data generalizes well to unseen test data \citep{statistical_Learning}. However, traditional analysis assumes that the algorithm has access to untainted data drawn from the underlying distribution. Relaxing this assumption, a significant body of recent work has focused on making machine learning algorithms robust to stochastic or adversarial noise; the former is too benign \citep{Litt88,GS95,frenay2013classification,natarajan2013learning}, while the latter is too pessimistic \citep{KL93,BEK02,CCM13,gu2014towards}. A third model, more recent and prescient, is that of \textit{strategic noise}, which is a game-theoretic modeling of noise that sits in between the two. Here, it is assumed that the training set is provided by self-interested agents, who may manipulate to minimize loss on their own data.

We focus on strategic noise in linear regression. \citet{dekel2010incentive} provide an example of retailer Zara, which uses regression to predict product demand at each store, partially based on self-reported data provided by the stores. Given limited supply of popular items, store managers may engage in strategic manipulation to ensure the distribution process benefits them, and there is substantial evidence that this is widespread \citep{caro2010zara}. Strategic behavior by even a small number of agents can significantly affect the overall system, including agents who have not participated in such behavior. Prior work has focused on designing \emph{strategyproof} algorithms for linear regression \citep{perote2004strategy,dekel2010incentive,chen2018strategyproof}, under which agents provably cannot benefit by misreporting their data. While strategyproofness is a strong guarantee, it is only satisfied by severely restricted algorithms. Indeed, as we observe later in the paper, most practical algorithms for linear regression are \emph{not} strategyproof. 

When strategic agents with competing interests manipulate the input data under a non-strategyproof algorithm, a game is induced between them. Game theory literature offers several tools to analyze such behaviour, such as Nash equilibria and the price of anarchy~\cite{nisan2007algorithmic}. We use these tools to answer three key questions:

\begin{itemize}
	\setlength\itemsep{0.3em}
	\item Does the induced game always admit a pure Nash equilibrium?
	\item What are the characteristics of these equilibria?
	\item Is there a connection between strategyproof algorithms and equilibria of non-strategyproof algorithms?
\end{itemize}

We consider linear regression algorithms which minimize the $\ell_p$-norm of residuals (where $p > 1$) with convex regularization. This class includes most popular linear regression algorithms, including the ordinary least squares (OLS), lasso, group lasso, ridge regression, and elastic net regression. Our key result is that the game induced by an algorithm in this class has three properties: a) it always has a pure Nash equilibrium, b) all pure Nash equilibria result in the same regression hyperplane, and c) there exists a strategyproof algorithm which returns this equilibrium regression hyperplane given non-manipulated data. We also analyze the quality of this equilibrium outcome, measured by the pure price of anarchy. We show that for a broad subset of algorithms in this class, the pure price of anarchy is unbounded. 

\subsection{Related Work}
A special case of linear regression is facility location in one dimension~\cite{moulin1980strategy}, where each agent $i$ is located at some $y_i$ on the real line. An algorithm elicits the preferred locations of the agents (who can misreport) and chooses a location $\bar{y}$ to place a facility. A significant body of literature in game theory is devoted to understanding strategyproof algorithms in this domain~\cite{moulin1980strategy,caragiannis2016truthful}, which includes placing the facility at the median of the reported locations. A more recent line of work studies equilibria of non-strategyproof algorithms such as placing the facility at the average of the reported locations~\cite{renault2005protecting,renault2011assessing,yamamura2013generalized}. Similarly, in the more general linear regression setting, prior work has focused on strategyproof algorithms~\cite{perote2004strategy,dekel2010incentive,chen2018strategyproof}. We complete the picture by studying equilibria of non-strategyproof algorithms for linear regression. 

We use a standard model of strategic manipulations in linear regression~\cite{perote2004strategy,dekel2010incentive,chen2018strategyproof}. \citet{perote2004strategy} designed a strategyproof algorithm in two dimensions. \citet{dekel2010incentive} proved that least absolute deviations (LAD), which minimizes the $\ell_1$-norm of residuals without regularization, is strategyproof. \citet{chen2018strategyproof} extended their result to include regularization, and designed a new family of strategyproof algorithms in high dimensions. They also analyzed the loss in mean squared error (MSE) under a strategyproof algorithm as compared to the OLS, which minimizes MSE. They showed that any strategyproof algorithm has at least twice as much MSE as the OLS in the worst case, and that this ratio is $\Theta(n)$ for LAD. Our result (Theorem~\ref{theorem:regression_sc_unbounded}) shows that the ratio of the equilibrium MSE under the algorithms we study to the optimal MSE of the OLS is unbounded. Through the connection we establish to strategyproof algorithms (Theorem~\ref{theorem:NE_is_strategyproof}), this also implies unbounded ratio for the broad class of corresponding strategyproof algorithms. 

Finally, we mention that strategic manipulations have been studied in various other machine learning contexts, e.g., manipulations of feature vectors \citep{HMPW16,DRZWW18}, strategic classification~\cite{MPR12,HMPW16,DRZWW18}, competition among different algorithms~\cite{mansour2017competing,immorlica2011dueling,ben2017best,ben2019regression}, or manipulations due to privacy concerns~\cite{CIL15,CDP15}.  

\section{Model}\label{section:Model}
In linear regression, we are given $n$ training data points of the form $(\vxi,y_i)$, where $\vxi \in \R^d$ are the explanatory variables, and $y_i \in \R$ is the response variable.\footnote{In the regression literature, these are also called independent and dependent variables, respectively. Following the standard convention, we assume that the last component of each $\vxi$ is a constant, say $1$.} Let $\vx \in \R^{n \times d}$ be the matrix with $\vxi$ as its $i^{\text{th}}$ column, and $\vy = (y_1,\ldots,y_n)$. The goal of a linear regression algorithm is to find a hyperplane with normal vector $\vbeta$ such that $\vbeta^T \vxi$ is a good estimate of $y_i$. The residual of point $i$ is $r_i = |y_i - \vbeta^T x_i|$. 

\paragraph{\textbf{Algorithms:}} We focus on a broad class of algorithms parametrized by $p > 1$ and a regularizing function $R:\R^d \to \R$. The $(p,R)$-regression algorithm minimizes the following loss function over $\vbeta$:
\begin{equation}\label{equation:loss_fcn}
\smash{\mathcal{L}(\vy,\vx,\vbeta) = {\textstyle\sum_{i=1}^{n}}{|y_i - \vbeta^T \vxi|^p} + R(\vbeta).}
\end{equation}
We assume that $R$ is convex and differentiable. For $p > 1$, this objective is strictly convex, admitting a unique optimum $\vbeta^*$. When there is no regularization, we refer to it as the $(p,0)$-regression algorithm.

\paragraph{\textbf{Strategic model:}} We follow a standard model of strategic interactions studied in the literature~\cite{perote2004strategy,dekel2010incentive,chen2018strategyproof}. A training data point $(\vxi,y_i)$ is provided by an agent $i$. $\agents = [n] := \set{1,\ldots,n}$ denotes the set of all agents. $\vxi$ is public information, which is non-manipulable, but $y_i$ is held private by agent $i$. We assume a subset of agents $\honest \subset \agents$ (with $h=|\honest|$) are honest and always report $\tyi = y_i$. The remaining agents in $\manip = \agents \setminus \honest$ (with $m=|\manip|$) are strategic and may report $\tyi \neq y_i$. Note that we allow all agents in $N$ be strategic; that is, we allow $H = \emptyset$ and $M = N$. For convenience, we assume that $\manip = [m]$ and $\honest = \set{m+1,\ldots,n}$. However, we emphasize that our algorithms do not know which agents are strategic and which are honest. Given a set of reports $\vty$, honest agents' reports are denoted by $\vty_{\honest}$ (note that $\vty_{\honest}=\vy_{\honest}$) and strategic agents' reports by $\vty_{\manip}$. In accordance with related literature, we focus our analysis to the training set and do not consider strategic manipulation in test data, leaving this for future work.  

The $(p,R)$-regression algorithm takes as input $\vx$ and $\vty$, and returns $\vbetast$ minimizing the loss in Equation~\eqref{equation:loss_fcn}. We say that $\byi = (\vbetast)^T\vxi$ is the \emph{outcome} for agent $i$. Since $\vx$ and $\vy_{\honest}$ are non manipulable, we can treat them as fixed. Hence, $\vty_{\manip}$ is the only input which matters, and $\vby_{\manip}$ is the output for these manipulating agents. For an algorithm $f$, we use the notation $f(\vty_{\manip}) = \vby_{\manip}$, and let $f_i$ denote the function returning agent $i$'s outcome $\byi$. A strategic agent $i$ manipulates to ensure this outcome is as close to her true response variable $y_i$ as possible. Formally, agent $i$ has \emph{single-peaked preferences} $\succeq_i$ (with strict preference denoted by $\succ_i$) over $\byi$ with peak at $y_i$. That is, for all $a < b \le y_i$ or $a > b \ge y_i$, we have $b \succ_i a$. Agent $i$ is \emph{perfectly happy} when $\byi = y_i$. In this work, we assume that for each agent $i$, both $y_i$ and $\tyi$ are bounded (WLOG, say they belong to $[0,1]$).

\paragraph{\textbf{Nash equilibria:}} This strategic interaction induces a game among agents in $\manip$, and we are interested in the pure Nash equilibria (PNE) of this game. We say that $\vty_{\manip}$ is a \emph{Nash equilibrium} (NE) if no strategic agent $i \in \manip$ can strictly gain by changing her report, i.e., if $\forall i, \, \forall \, \tyi'$, $f_i(\vty_{\manip}) \succeq_i f_i(\tyi',\vty_{\manip\setminus\{i\}})$. We say that $\vty_{\manip}$ is a \emph{pure Nash equilibrium} (PNE) if it is a NE and each $\tyi$ is deterministic. Let $\NE_f(\vy)$ denote the set of pure Nash equilibria under $f$ when the peaks of agents' preferences\footnote{ Equilibria can generally depend on the full preferences, but results in Section~\ref{section:regression} show only peaks matter.} are given by $\vy$. For $\vhy_{\manip} \in \NE_f(\vy)$, let $f(\vhy_{\manip})$ be the corresponding \emph{PNE outcome}.

\paragraph{\textbf{Strategyproofness:}} We say that an algorithm $f$ is \emph{strategyproof} if no agent can benefit by misreporting her true response variable regardless of the reports of the other agents, i.e., $\forall i,\, \forall \vty_{\manip}$, $f_i(y_i,\vty_{\manip\setminus\{i\}})$ $\succeq_i$ $f_i(\vty_{\manip})$. Note that strategyproofness implies that each agent reporting her true value (i.e. $\vty_{\manip} = \vy_{\manip}$) is a pure Nash equilibrium.

\paragraph{\textbf{Pure price of anarchy (PPoA):}} It is natural to measure the cost of selfish behavior on the overall system. A classic notion is the \emph{pure price of anarchy} (PPoA)~\cite{koutsoupias1999worst,nisan2007algorithmic}, which is defined as the ratio between the maximum social cost under any PNE and the optimal social cost under honest reporting, for an appropriate measure of social cost. Here, social cost is a measure of the overall fit. In regression, it is typical to measure fit using the $\ell_q$ norm of absolute residuals for some $q$. While we study the equilibrium of $\ell_p$ regression mechanisms for different $p$ values, we need to evaluate them using a single value of $q$, so that the results are comparable. For our theoretical analysis, we use mean squared error (which corresponds to $q=2$) since it is the standard measure of fit in the literature~\cite{chen2018strategyproof}. One way to interpret our results is: \emph{If our goal were to minimize the MSE, which $\ell_p$ regression mechanism would we choose, assuming that the strategic agents would achieve equilibrium?} We also present empirical results for other values of $q$. Slightly abusing the notation by letting $f$ map all reports to all outcomes (not just for agents in $\manip$), we write: 
$$
\PPoA(f) = \max_{\vy \in [0,1]^n} \frac{\max_{\vhy \in \NE_f(\vy)} {\textstyle\sum_{i=1}^n} |y_i-f_i(\vhy)|^2}{{\textstyle\sum_{i=1}^n} |y_i-\byi^\text{OLS}|^2},
$$
where $\vby^{\text{OLS}}$ is the outcome of OLS (i.e. the $(2,0$)-regression algorithm) under honest reporting, which minimizes mean squared error. Note that the PPoA, as we have defined it, measures the impact of the behavior of strategic agents on all agents, including on the honest agents.        

\section{Warm-Up: The 1D Case}\label{section:1d}
As a warm-up, we review the more restricted facility location setting in one dimension. Here, each agent $i$ has an associated scalar value $y_i \in [0,1]$ and the algorithm must produce the same outcome for all agents (i.e. $\bar{y}_i = \bar{y}_j \, \forall \, i,j \in \agents$). Hence, the algorithm is a function $f:[0,1]^{m} \to \R$. This is a special case of linear regression where agents have identical explanatory variables.

Much of the literature on facility location has focused on strategyproof algorithms. \citet{moulin1980strategy} showed that an algorithm $f$ is strategyproof and anonymous\footnote{This is a mild condition which requires treating the agents symmetrically.} if and only if it is a \emph{generalized median} given by $f(y_1,\ldots,y_{n}) = \med(y_1,\ldots,y_{n},$ $\alpha_0,\ldots,\alpha_n)$, where $\med$ denotes the median and $\alpha_k$ is a fixed constant (called a \emph{phantom}) for each $k$. \citet{caragiannis2016truthful} focused on a notion of worst-case statistical efficiency, and provided a characterization of generalized medians which exhibit optimal efficiency. In particular, they showed that the \emph{uniform generalized median} given by $f(y_1,\ldots,y_{n}) = \med(y_1,\ldots,y_{n},0,1/n,2/n,\ldots,1)$ is has optimal statistical efficiency. 

A more recent line of literature has focused on manipulations under non-strategyproof rules. Recall that under a non-strategyproof rule $f$, each strategic agent $i \in \manip$ reports a value $\tyi$, which may be different from $y_i$. For the facility location setting, the $(p,R)$-regression algorithm described in Section~\ref{section:Model} reduces to $f(\tilde{y}_1,\ldots,\tilde{y}_n) = \argmin_y \sum_{i=1}^m |\tilde{y}_i-y|^p + \sum_{i=m+1}^n |y_i-y|^p + R(y)$. For $p=1$, this is known to be strategyproof~\cite{chen2018strategyproof}. When $p > 1$, which is the focus of our work, this rule is not strategyproof, as we observe in Section~\ref{section:regression}. 

In this family, the most natural rule is the \emph{average rule} given by $f(\tilde{y}_1,\ldots,\tilde{y}_n)$ $= (1/n) \sum_{i=1}^n \tyi$. This corresponds to $p=2$ with no honest agents or regularization. For this rule, \citet{renault2005protecting} showed that there is always a pure Nash equilibrium, and the pure Nash equilibrium outcome is unique. This outcome is given by $\med(y_1,\ldots,y_{n},0,1/n,\ldots,1)$, which coincides with the outcome of the uniform generalized median, which is strategyproof.

Generalizing this result, \citet{yamamura2013generalized} proved that any algorithm $f$ satisfying four natural axioms has a unique PNE outcome, which is given by the generalized median $\med(y_1,\ldots$ $,y_n,\alpha_0,\ldots,\alpha_n)$, where $\alpha_k= f(0,\ldots,0,\underbrace{1,\ldots,1}_\text{$k$ times})$ for each $k$. 

We note that the `vanilla' $\ell_p$-norm algorithm with no honest agents or regularization satisfies the axioms of \citet{yamamura2013generalized}. Using the result of \citet{yamamura2013generalized} described above, this algorithm has a unique PNE outcome given by the generalized median $\med(y_1,\ldots,y_n,\alpha_0,\ldots,\alpha_n)$, where $ \alpha_k = \frac{k^{\frac{1}{p-1}}}{(n-k)^{\frac{1}{p-1}} + k^{\frac{1}{p-1}}}$ for each $k \in \set{0,1,\ldots,n}$. It is easy to see that $\alpha_0 = 0$ and $\alpha_n = 1$. For $k \in \set{1,\ldots,n-1}$, $\alpha_k$ is the minimizer $\argmin_{\by \in \R} k|1-\by|^{p} + (n-k)|\by|^{p}$. Taking the derivative w.r.t. $\by$, we can see that the optimal solution is given by 
\begin{equation}\label{equation:1d_p_optimal_0_1}
\begin{split}
-k(1-\alpha_k)^{p-1} + (-k)\alpha_k^{p-1} = 0 \implies \alpha_k = \frac{k^{\frac{1}{p-1}}}{(n-k)^{\frac{1}{p-1}} + k^{\frac{1}{p-1}}}
\end{split}
\end{equation}

Below, we extend this to the general $(p,R)$-regression algorithm with $p > 1$, convex regularizer $R$, and with the possibility of honest agents. We omit the proof because, in the next section, we prove this more generally for the linear regression setting (Theorems~\ref{theorem:PNE_existence_regression},~\ref{theorem:unique_outcome}, and~\ref{theorem:NE_is_strategyproof}). 

\begin{theorem}\label{theorem:1d}
	Consider facility location with $n$ agents, of which a subset of agents $M$ are strategic and have single-peaked preferences with peaks at $\vy_{\manip} \in [0,1]^m$. Let $f$ denote the $(p,R)$-regression algorithm with $p > 1$ and convex regularizer $R$. Then, the following statements hold for $f$.
	\begin{enumerate}
		\item For each $\vy_{\manip}$, there is a pure Nash equilibrium $\vhy_{\manip} \in \NE_f(\vy_{\manip})$.
		\item For each $\vy_{\manip}$, all pure Nash equilibria $\vhy_{\manip}  \in \NE_f(\vy_{\manip})$ have the same outcome $f(\vhy_{\manip})$.
		\item There exists a strategyproof algorithm $h$ such that for all $\vy_{\manip}$ and all pure Nash equilibria $\vhy_{\manip} \in \NE_f(\vy_{\manip})$, $f(\vhy_{\manip}) = h(\vy_{\manip})$.
	\end{enumerate}
\end{theorem}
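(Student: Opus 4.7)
\textbf{The plan} is to characterize the best responses of strategic agents, construct a candidate strategyproof generalized median whose output coincides with the PNE outcome, and derive all three claims simultaneously. Since $p > 1$ and $R$ is convex and differentiable, the loss in Equation~\eqref{equation:loss_fcn} is strictly convex in $\bar{y}$, so $f(\vty_{\manip})$ is a well-defined continuous function of the reports; the implicit function theorem applied to the first-order condition further shows it is weakly increasing in each $\tyi$. As $\tyi$ varies over $[0,1]$ with the other reports fixed, $f$ sweeps out a compact interval $[f(0,\vty_{\manip\setminus\{i\}}),f(1,\vty_{\manip\setminus\{i\}})]$. Combined with single-peakedness, this pins down agent $i$'s best response to one of three forms: report $0$ when $y_i$ is below this interval, report $1$ when $y_i$ is above it, or else choose $\tyi$ so that $f(\tyi,\vty_{\manip\setminus\{i\}}) = y_i$ exactly.

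Next I would build the candidate strategyproof algorithm from phantoms $\alpha_k := f(\underbrace{0,\ldots,0}_{m-k},\underbrace{1,\ldots,1}_k)$ for $k=0,\ldots,m$, with the fixed honest reports and regularizer absorbed into $f$. Monotonicity yields $\alpha_0 \leq \cdots \leq \alpha_m$, so $h(\vy_{\manip}) := \med(y_1,\ldots,y_m,\alpha_0,\ldots,\alpha_m)$ is a generalized median with fixed phantoms, and hence strategyproof by Moulin's characterization. To establish existence of a PNE with outcome $h(\vy_{\manip})$, let $\bar{y}^{\star} := h(\vy_{\manip})$ and let $k^{\star}$ be the number of strategic agents with $y_i > \bar{y}^{\star}$. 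By the defining property of the median, the phantoms and peaks straddle $\bar{y}^{\star}$ in the requisite way so that I can construct $\vhy_{\manip}$ in which the top $k^{\star}$ agents report $1$, the agents with $y_i < \bar{y}^{\star}$ report $0$, and the at-most-one remaining middle agent reports a value, guaranteed by the intermediate value theorem together with continuity and monotonicity of $f$, that makes $f(\vhy_{\manip}) = \bar{y}^{\star}$. I would then verify this is a PNE using the best-response characterization above.

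For uniqueness of the equilibrium outcome, I take an arbitrary PNE and invoke the first step to force every strategic agent to play $0$, $1$, or a perfect response. Monotonicity of $f$ and single-peakedness then combine to pin the number of $1$-reporters to $k^{\star}$: any other count would make some agent strictly prefer switching among $0$, $1$, and the perfect-response value, contradicting equilibrium. This fixes the outcome at $\bar{y}^{\star} = h(\vy_{\manip})$ and simultaneously closes all three claims.

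\textbf{The main obstacle} is the uniqueness step. Multiple PNE report profiles can coexist, especially when some peaks coincide with phantoms or among themselves, and the challenge is to show outcome-invariance across all of them. The key lemma is that the ``$1$-count'' at equilibrium is determined purely by $\vy_{\manip}$ and the $\set{\alpha_k}$ via the median formula. The honest agents and the regularizer only shift the numerical values of the $\alpha_k$ without disrupting the structural argument, so the proof generalizes cleanly from the vanilla $\ell_p$-setting already covered by the result of \citet{yamamura2013generalized}.
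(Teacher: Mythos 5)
Your proof is essentially correct, but it takes a genuinely different route from the paper. The paper does not prove Theorem~\ref{theorem:1d} directly: it derives all three claims as the one-dimensional special case of the general linear-regression results (Brouwer's fixed point theorem applied to the best-response map for existence, Theorem~\ref{theorem:PNE_existence_regression}; a Karamata/rearrangement argument for uniqueness of the outcome, Theorem~\ref{theorem:unique_outcome}; and an abstract ``unique PNE outcome $\Rightarrow$ strategyproof'' argument, Theorem~\ref{theorem:NE_is_strategyproof}, that never needs a closed form for the equilibrium). You instead follow the \citet{yamamura2013generalized}/Renault--Trannoy line: exhibit the PNE outcome explicitly as the generalized median $\med(y_1,\ldots,y_m,\alpha_0,\ldots,\alpha_m)$ with phantoms $\alpha_k = f(0,\ldots,0,1,\ldots,1)$, get uniqueness from the fact that any PNE outcome must satisfy this median identity, and get strategyproofness from \citet{moulin1980strategy}. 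Your approach buys more --- an explicit formula for $h$, which the paper itself uses later for computation in Section~\ref{section:implementation_and_experiments} --- but it is intrinsically one-dimensional (it relies on all agents sharing one outcome and on the symmetry that makes $\alpha_k$ independent of \emph{which} $k$ agents report $1$); the paper's indirect argument is what generalizes to regression, where no analytic form of the PNE outcome is known.

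Three points in your write-up need repair, though none is fatal. First, you claim only \emph{weak} monotonicity of $f$ in $\tyi$ via the implicit function theorem; your best-response characterization and the PNE characterization (report $0$, report $1$, or be perfectly happy) genuinely require \emph{strict} monotonicity, since a locally flat $f$ would admit equilibria with $\tyi \in (0,1)$ and $\byi \neq y_i$. The IFT derivative $\partial\byi/\partial\tyi$ vanishes where $\tyi = \byi$ when $p>2$ (and the loss is not twice differentiable there when $p<2$), so you need a separate global argument --- the paper's Lemmas~\ref{lemma:beta_cant_be_same}--\ref{lemma:monotone_regression} do this via a majorization inequality. Second, in the existence construction there can be \emph{several} agents with $y_i = \by^{\star}$, not ``at most one''; you then need the IVT applied jointly to all middle agents' reports, with the median inequalities $\alpha_{k^\star} \le \by^\star \le \alpha_{k^\star + t}$ (where $t$ is the number of ties) guaranteeing a solution exists. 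Third, your ``key lemma'' that the $1$-count at equilibrium is determined by $\vy_{\manip}$ and the phantoms is false under ties: an agent with $y_i$ equal to the equilibrium outcome may report $1$ in one PNE and an interior value in another. What is determined is the \emph{outcome}, via the counting argument that places at least $m+1$ of the $2m+1$ points of the multiset on each side of any PNE outcome; state uniqueness that way and the argument closes.
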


Theorem~\ref{theorem:1d} guarantees the existence of a pure Nash equilibrium and highlights an interesting structure of the equilibrium. The next immediate question is to analyze the quality of this equilibrium. We show that the PPoA of any $(p,0)$-regression algorithm (i.e. without regularization) is $\Theta(n)$. Interestingly, this holds even if only a single agent is strategic, and the bound is independent of $p$.
\begin{theorem}\label{theorem:1d-poa}
	Consider facility location with $n$ agents, of which a subset of agents $M$ are strategic. Let $f$ denote the $(p,0)$-regression algorithm with $p > 1$. When $|\manip| \ge 1$, $\PPoA(f) = \Theta(n)$. 
\end{theorem}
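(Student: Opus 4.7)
The plan is to prove matching $\Omega(n)$ and $O(n)$ bounds. The centerpiece of the upper bound is the structural claim that at any PNE $\vhy_{\manip}$, the (identical) output $\by := f_i(\vhy)$ satisfies $\by \in [\min_i y_i, \max_i y_i]$ (with the min/max taken over true peaks). I would establish this by contradiction: if $\by > \max_i y_i$, then since $p > 1$ the loss is continuously differentiable, and the first-order condition at $\by$, namely $\sum_j \mathrm{sign}(\by - \tyj)|\by - \tyj|^{p-1} = 0$, has every honest contribution ($\tyj = y_j \le \max_i y_i < \by$) strictly positive, forcing some strategic $j^* \in \manip$ to report $\tilde{y}_{j^*} > \by$. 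But agent $j^*$'s true peak $y_{j^*} \le \max_i y_i < \by$, so by strict monotonicity of the $\ell_p$-minimizer in each input coordinate she could reduce $\tilde{y}_{j^*}$ to bring $\by$ strictly closer to $y_{j^*}$, contradicting PNE. A symmetric argument handles the case $\by < \min_i y_i$.

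Given this structural claim, I would combine the bias-variance identity $\sum_i (y_i - \by)^2 = \sum_i (y_i - \by^{\mathrm{OLS}})^2 + n(\by - \by^{\mathrm{OLS}})^2$ with the observation that both $\by$ and $\by^{\mathrm{OLS}}$ (the mean of $\vy$) lie in $[\min_i y_i, \max_i y_i]$, giving $(\by - \by^{\mathrm{OLS}})^2 \le \max_i (y_i - \by^{\mathrm{OLS}})^2 \le \sum_i (y_i - \by^{\mathrm{OLS}})^2$. Substituting yields PNE MSE $\le (n+1)\cdot$ OLS MSE, so $\PPoA(f) \le n+1$. For the matching lower bound, I would use a single-strategic-agent instance: set $\epsilon := 1/\bigl(1 + (n-1)^{1/(p-1)}\bigr)$, $y_1 = \epsilon$ (strategic), and $y_2 = \cdots = y_n = 0$ (honest). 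The same first-order derivation that produced Equation~\eqref{equation:1d_p_optimal_0_1} shows the $(p,0)$ output, as a function of the strategic report, is $g(\ty_1) = \epsilon\,\ty_1 \in [0,\epsilon]$. Reporting $\ty_1 = 1$ thus produces output exactly $\epsilon = y_1$, so it is agent $1$'s best response and, by Theorem~\ref{theorem:1d}, gives the unique PNE outcome. Direct computation yields PNE MSE $= (n-1)\epsilon^2$ and OLS MSE $= \epsilon^2(n-1)/n$, whose ratio is exactly $n$, so $\PPoA(f) \ge n$.

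The main obstacle I expect is the structural lemma $\by \in [\min_i y_i, \max_i y_i]$. The intuition is clean --- no strategic agent would, in equilibrium, drive the output past her own peak --- but the rigorous argument needs to marry the first-order condition of the $\ell_p$ loss with strict coordinate-wise monotonicity of the $\ell_p$-minimizer (which holds for all $p>1$ but degenerates at $p=2$ and at the kink $y=\tyj$ when $p<2$, requiring a small amount of care), while also ensuring the improving deviation is a feasible report in $[0,1]$ (which it is, because the offending $\tilde{y}_{j^*}$ exceeds $\by>0$, so lowering it keeps the report in range). Once this lemma is in place, the bias-variance identity makes the upper bound immediate, and the lower-bound ratio is a routine calculation.
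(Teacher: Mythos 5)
Your proposal is correct and its skeleton mirrors the paper's: both proofs hinge on the same structural lemma that the equilibrium outcome lies in $[\min_i y_i, \max_i y_i]$ (your first-order-condition argument and the paper's argument via Lemma~\ref{lemma:PNE_conditions} --- that an out-of-range outcome forces all strategic agents to saturate at the same endpoint, after which the $\ell_p$ minimizer cannot lie outside the range of the reports --- are interchangeable, though yours should be phrased to also cover the case $\honest=\emptyset$, where there are no ``honest contributions'' and one instead concludes all reports equal $0$), and your lower-bound instance is exactly the paper's example reflected about $1/2$ (your $\epsilon$ is the paper's $1-\alpha_{n-1}$), yielding the same ratio of exactly $n$. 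The genuine difference is the finish of the upper bound: the paper argues that, by convexity of the MSE, the worst outcome in $[a,b]$ is an endpoint, and then bounds $\sum_i|y_i-a|^2$ via the pointwise inequality $|y_i-a|^2\le 2(|y_i-\by|^2+|a-\by|^2)$, giving $\PPoA\le 2n$; you instead use the bias--variance identity $\sum_i(y_i-\byne)^2=\sum_i(y_i-\byh)^2+n(\byne-\byh)^2$ together with $|\byne-\byh|\le\max_i|y_i-\byh|$, giving the slightly sharper bound $n+1$ in one line. Your decomposition is cleaner and tighter; the paper's endpoint argument is marginally more robust in that it does not use any special property of the mean beyond its membership in $[a,b]$. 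Both are valid, and the asymptotic conclusion $\Theta(n)$ is unaffected.
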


\begin{proof}
	Define $a = \min_{i}{y_i}$ and $b = \max_{i}{y_i}$. As PPoA is measured with MSE, the optimal social cost is achieved with the location $\byh = (1/n) \sum_i y_i$. Let $\byne$ denote the unique PNE outcome of the algorithm. Note that $\byh,\byne$ $\in [a,b]$. For $\byh$, this holds by definition. To see this for $\byne$, WLOG let $\byne < a$.  
	Then all manipulating agents must be reporting $1$, and the honest agents maintain their honest reports in $[a,b]$ (see Lemma ~\ref{lemma:PNE_conditions}). However, then $\ell_p$ loss optimal outcome on this input cannot be $\byne < a$ as $a$ would have a lower loss. A symmetric argument holds for $\byne > b$. Thus, $\byne \in [a,b]$.  
	
	We first show a lower bound of $\Omega(n)$. Suppose a strategic agent $j \in \manip$ has preference with peak at $\alpha_{n-1} = \frac{(n-1)^{\frac{1}{p-1}}}{1 + (n-1)^{\frac{1}{p-1}}}$ and the remaining agents have preferences with peak at $1$. Note that $a=y_j=\alpha_{n-1}$ and $b=1$. We note that a PNE equilibrium is given by $\tilde{y}_j = 0$ and $\tilde{y}_i = 1 \, \forall i \ne j$, regardless of which agents other than $j$ are strategic. By Equation~\eqref{equation:1d_p_optimal_0_1}, the outcome on this input is $a=\alpha_k$. Now, we have that the MSE in the equilibrium is 
	$MSE_{eq} = \sum_i {|y_i - \byne|^2} = (n-1)(b-a)^2,$
	whereas the optimal MSE under honest reports is
	
	\begin{align*}
	MSE_{h} &= \sum_i{|y_i - \byh|^2}\\ 
	&=\left(b - \frac{(n-1)b + a}{n}\right)^2(n-1) + \left(\frac{(n-1)b + a}{n} - a\right)^2\\
	&=\left(\frac{b-a}{n}\right)^2(n-1) + \left(\frac{(n-1)(b-a)}{n}\right)^2\\
	&= \frac{(b-a)^2(n-1) + (n-1)^2(b-a)^2}{n^2}\\ 
	&= \frac{n(n-1)(b-a)^2}{n^2} = \frac{(n-1)(b-a)^2}{n}
	\end{align*}
	Hence, we have that $\PPoA \ge \frac{MSE_{eq}}{MSE_{h}} = n$.
	
	For the upper bound, since the MSE is a strictly convex function with a minimum at the sample mean $\byh$, the maximum allowable value of $MSE_{eq}$ is achieved at one of the end-points $a$ or $b$.
	Hence, we have 
	\begin{align*}
	\PPoA &= \frac{\sum_i{|y_i - \byne|^2}}{\sum_i{|y_i - \bar{y}|^2}}
	\le \max\left\{\frac{\sum_i{|y_i - a|^2}}{\sum_i{|y_i - \bar{y}|^2}},\frac{\sum_i{|y_i - b|^2}}{\sum_i{|y_i - \bar{y}|^2}}\right\}.
	\end{align*}
	We show that each quantity inside $\max$ in the last expression is $O(n)$. Let us prove this for the first quantity. The argument is symmetric for the second. Note that for each $i$ and each $y \in \R$, we have,
	\begin{align*}
	|y_i-y|^2 + |a-y|^2 &\ge |y_i-(y_i+a)/2|^2 + |a-(y_i+a)/2|^2
	= \frac{|y_i-a|^2}{2}.
	\end{align*}
	Hence, we have that for each $i$,
	\begin{equation*}
	|y_i-a|^2 \le 2 \cdot |y_i-\bar{y}|^2 + |a-\bar{y}|^2 \le 2 \sum_i |y_i-\bar{y}|^2.
	\end{equation*}  
	Summing this over all $i$, we get $\frac{\sum_i{|y_i - a|^2}}{\sum_i{|y_i - \bar{y}|^2}} \le 2n$, as desired.
\end{proof}

We remark that both Theorems~\ref{theorem:1d} and~\ref{theorem:1d-poa}, due to their generality, are novel results in the facility location setting.

\section{Linear Regression}\label{section:regression}
We now turn to the more general linear regression setting, which is the focus of our work, and highlight interesting similarities and differences to the facility location setting. Recall that for linear regression, the $(p,R)$-regression algorithm finds the optimal $\vbetast$ minimizing the loss function:
$$\mathcal{L}(\vty,\vx,\vbeta) = \sum_{i=1}^m |\tyi-\vbeta^T \vxi|^p + \sum_{i=m+1}^n |y_i-\vbeta^T \vxi|^p + R(\vbeta)$$ 

Let $i \in \manip$ be a strategic agent. Recall that her outcome is denoted by $\byi = (\vbetast)^T \vxi$. Let $\br_i(\vty_{-i}) = \set{\tyi \in [0,1] : f_i(\tyi,\vty_{-i}) \succeq_i f_i(\tilde{y}'_i,\vty_{-i}) \, \forall \, \tilde{y}'_i \in [0,1]}$ denote the set of her best responses as a function of the reports $\vty_{-i}$ of the other agents. Informally, it is the set of reports that agent $i$ can submit to induce her most preferred outcome.  

\subsection{Properties of the Algorithm, Best Responses, and Pure Nash Equilibria}\label{sec:properties}
We begin by establishing intuitive properties of $(p,R)$-regression algorithms. We first derive the following lemmas.

\begin{lemma}\label{lemma:beta_cant_be_same}
	Fix strategic agent $i \in \manip$ and reports $\vty_{-i}$ of the other agents. Let $\tyi^1$ and $\tyi^2$ be two possible reports of agent $i$, and let $\vbeta^1$ and $\vbeta^2$ be the corresponding optimal regression coefficients, respectively. Then, $\tyi^1 \neq \tyi^2$ implies $\vbeta^1 \neq \vbeta^2$. 
\end{lemma}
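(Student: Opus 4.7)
The plan is to prove the contrapositive: assume $\vbeta^1 = \vbeta^2 =: \vbeta$ and derive $\tyi^1 = \tyi^2$. Because $\mathcal{L}(\vty, \vx, \cdot)$ is convex and differentiable in $\vbeta$ for $p > 1$ with convex differentiable regularizer $R$, any minimizer $\vbeta$ is characterized by the first-order condition $\nabla_{\vbeta}\mathcal{L} = 0$, which must hold simultaneously for both report profiles $(\tyi^1, \vty_{-i})$ and $(\tyi^2, \vty_{-i})$.

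The next step is to compute the gradient explicitly. Using $\tfrac{d}{dt}|t|^p = p|t|^{p-2} t$ for $p > 1$ (well-defined and continuous on $\R$, with value $0$ at $t=0$), the chain rule yields
$$\nabla_{\vbeta} \mathcal{L}(\vty,\vx,\vbeta) \;=\; -p\sum_{j=1}^n \phi_j(\tyj)\,\vxj \;+\; \nabla R(\vbeta), \qquad \phi_j(y) := |y - \vbeta^T \vxj|^{p-2}(y - \vbeta^T \vxj).$$
Subtracting the first-order conditions for the two reports at the common $\vbeta$, every term for $j \ne i$ and the regularizer gradient cancel because $\vx$, $\vty_{-i}$, and $\vbeta$ are unchanged across the two instances. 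What remains is
$$\bigl(\phi_i(\tyi^1) - \phi_i(\tyi^2)\bigr)\,\vxi \;=\; \vec{0}.$$

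To finish, I would invoke two facts. First, $\vxi \ne \vec{0}$ because the last coordinate of every $\vxi$ is the constant $1$ (paper's convention), so the displayed equation forces the scalar $\phi_i(\tyi^1) - \phi_i(\tyi^2) = 0$. Second, $\phi_i(y) = |y - c|^{p-1}\operatorname{sign}(y - c)$ with $c = \vbeta^T \vxi$ is strictly increasing in $y$ for any $p > 1$: it is strictly negative on $y < c$, zero at $y = c$, strictly positive on $y > c$, and strictly monotone on each side by power-function monotonicity. Hence $\phi_i(\tyi^1) = \phi_i(\tyi^2)$ gives $\tyi^1 = \tyi^2$, proving the contrapositive.

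The only point of care, more subtlety than obstacle, is justifying strict monotonicity of $\phi_i$ at the anchor $y = c$: when $p > 2$, its derivative $(p-1)|y - c|^{p-2}$ vanishes there, so one cannot appeal to a uniformly positive derivative. The sign characterization above handles this cleanly without a case split on where $\tyi^1$ and $\tyi^2$ sit relative to $c$.
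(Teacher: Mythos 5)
Your proof is correct and takes essentially the same route as the paper's: both write down the vanishing gradient of the loss at the common minimizer for the two report profiles, cancel all terms except agent $i$'s, and conclude from the injectivity of $y \mapsto |y-c|^{p-2}(y-c)$ together with $\vxi \neq \vec{0}$. If anything, you make explicit (the strict monotonicity of $\phi_i$, including the $p>2$ subtlety at $y=c$) a step the paper's proof merely asserts.
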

\begin{proof}
	Suppose for contradiction that $\vbeta^1 = \vbeta^2 = \vbetast$. We note that at the optimal regression coefficients, the gradient of our strictly convex loss function must vanish. Let the loss functions on the two instances be given by $\mathcal{L}^1$ and $\mathcal{L}^2$, respectively. So for $k \in \set{1,2}$,       \begin{equation*}
	\mathcal{L}^k(\vbeta) = |\tyi^k - \vxi^T\vbeta|^p + \sum_{j\neq i}{|\tyj - \vxj^T\vbeta|^p} + R(\vbeta).
	\end{equation*}
	
	Since $\vbetast$ is optimal for $\mathcal{L}^1$, taking the derivative, we have 
	\begin{align*}
	&\nabla R(\vbetast) - \sum_{j \ne i}{p|\tyj - \vxj^T\vbetast|^{p-2}(\tyj - \vxj^T\vbetast)\vxj} \\
	&\qquad =  p|\tyi^1 - \vxi^T\vbetast|^{p-2}(\tyi^1 - \vxi^T\vbetast)\vxi\\
	&\qquad \ne p|\tyi^2 - \vxi^T\vbetast|^{p-2}(\tyi^2 - \vxi^T\vbetast)\vxi,
	\end{align*}
	where the last inequality follows because $\tyi^1 \neq \tyi^2$ and $\vxi$ is not the $\vec{0}$ vector (its last element is a non-zero constant). Hence, the gradient of $\mathcal{L}^2$ at $\vbetast$ is not zero, which is a contradiction. 
\end{proof}

\begin{lemma}\label{lemma:rearrangement_inequality}
	For $a_1 \ge a_2$, $b_1 \ge b_2$, and $p \ge 1$, we have 
	$$ |a_1-b_1|^p + |a_2-b_2|^p \le |a_1-b_2|^p + |a_2-b_1|^p.$$
\end{lemma}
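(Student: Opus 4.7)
The plan is to recognize this as a standard consequence of convexity applied to a rearrangement. Introduce the shorthand
$$A = a_1 - b_1, \quad B = a_2 - b_2, \quad C = a_1 - b_2, \quad D = a_2 - b_1,$$
so the claim becomes $\phi(A) + \phi(B) \le \phi(C) + \phi(D)$, where $\phi(x) = |x|^p$ is convex on $\R$ for $p \ge 1$.

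First I would verify two simple arithmetic facts. (i) We have $A + B = C + D$, since both sides equal $a_1 + a_2 - b_1 - b_2$. (ii) Using $a_1 \ge a_2$ and $b_1 \ge b_2$, one immediately checks $C \ge A$, $C \ge B$, $A \ge D$, and $B \ge D$. Hence both $A$ and $B$ lie in the interval $[D, C]$.

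Next, the key step: because $A \in [D,C]$, there exists $\lambda \in [0,1]$ with $A = \lambda C + (1-\lambda) D$, and then $B = (C+D) - A = (1-\lambda)C + \lambda D$ by fact (i). Convexity of $\phi$ yields
\begin{align*}
\phi(A) &\le \lambda \phi(C) + (1-\lambda)\phi(D), \\
\phi(B) &\le (1-\lambda)\phi(C) + \lambda \phi(D).
\end{align*}
Adding the two inequalities gives $\phi(A) + \phi(B) \le \phi(C) + \phi(D)$, which is exactly the desired bound.

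There is no serious obstacle here; the only bookkeeping is checking the ordering $D \le A, B \le C$ and the identity $A + B = C + D$, after which a single application of convexity closes the argument. (If $C = D$ then all four quantities coincide and both sides are equal, so the edge case is trivial.)
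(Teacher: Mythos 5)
Your proof is correct and follows essentially the same route as the paper: the paper observes that $(a_1-b_2,\,a_2-b_1)$ majorizes $(a_1-b_1,\,a_2-b_2)$ and invokes the Karamata majorization inequality for the convex function $|x|^p$, while you have simply inlined a self-contained proof of that two-point case (equal sums, both inner values trapped in $[D,C]$, then convexity with complementary weights $\lambda$ and $1-\lambda$). Nothing is missing; your version is just more elementary and avoids citing Karamata.
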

\begin{proof}
	Note that vector $(a_1-b_2,a_2-b_1)$ majorizes the vector $(a_1-b_1,a_2-b_2)$. For $p \ge 1$, $f(x) = |x|^p$ is a convex function. Hence, by the Karamata majorization inequality, the result follows. 
\end{proof}

\begin{lemma}\label{lemma:monotone_regression}
	The outcome $\byi$ of agent $i$ is continuous in $\vty$, and strictly increasing in her own report $\tyi$ for any fixed reports $\vty_{-i}$ of the other agents.
\end{lemma}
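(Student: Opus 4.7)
The plan is to prove continuity and strict monotonicity separately.

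For continuity, I will use the standard Berge maximum theorem argument. The loss $\mathcal{L}(\vec{\tilde{y}},\vec{X},\vec{\beta})$ is jointly continuous in $(\vec{\tilde{y}},\vec{\beta})$ and, for each fixed $\vec{\tilde{y}}$, strictly convex and coercive in $\vec{\beta}$ (since $p>1$ and $R$ is convex, yielding a unique minimizer $\vec{\beta}^*(\vec{\tilde{y}})$ in a compact sublevel set uniform over bounded $\vec{\tilde{y}}$). Berge's theorem then gives that $\vec{\beta}^*$ varies continuously with $\vec{\tilde{y}}$, and hence so does $\bar{y}_i = \vec{x_i}^T \vec{\beta}^*$.

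For strict monotonicity, I will use a sum-swap trick built on Lemmas~\ref{lemma:beta_cant_be_same} and~\ref{lemma:rearrangement_inequality}. Fix $\vec{\tilde{y}}_{-i}$ and consider $\tilde{y}_i^1 < \tilde{y}_i^2$, with corresponding optimizers $\vec{\beta}^1, \vec{\beta}^2$ and outcomes $\bar{y}_i^k = \vec{x_i}^T \vec{\beta}^k$. Writing $\mathcal{L}^k(\vec{\beta}) = |\tilde{y}_i^k - \vec{x_i}^T\vec{\beta}|^p + C(\vec{\beta})$ with $C(\vec{\beta})$ collecting the agent-$j\ne i$ losses and the regularizer (so $C$ does not depend on $k$), Lemma~\ref{lemma:beta_cant_be_same} and the strict convexity of $\mathcal{L}^k$ give the strict optimality inequalities $\mathcal{L}^1(\vec{\beta}^1) < \mathcal{L}^1(\vec{\beta}^2)$ and $\mathcal{L}^2(\vec{\beta}^2) < \mathcal{L}^2(\vec{\beta}^1)$. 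Summing them cancels $C$ exactly, leaving
\[
|\tilde{y}_i^1-\bar{y}_i^1|^p + |\tilde{y}_i^2-\bar{y}_i^2|^p \;<\; |\tilde{y}_i^1-\bar{y}_i^2|^p + |\tilde{y}_i^2-\bar{y}_i^1|^p.
\]

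It remains to turn this two-by-two inequality into the ordering $\bar{y}_i^1 < \bar{y}_i^2$. The case $\bar{y}_i^1 = \bar{y}_i^2$ is immediately impossible, since then both sides of the displayed strict inequality coincide termwise. For $\bar{y}_i^1 > \bar{y}_i^2$, I apply Lemma~\ref{lemma:rearrangement_inequality} with $a_1 = \tilde{y}_i^2 > \tilde{y}_i^1 = a_2$ and $b_1 = \bar{y}_i^1 > \bar{y}_i^2 = b_2$ to obtain
\[
|\tilde{y}_i^2-\bar{y}_i^1|^p + |\tilde{y}_i^1-\bar{y}_i^2|^p \;\le\; |\tilde{y}_i^2-\bar{y}_i^2|^p + |\tilde{y}_i^1-\bar{y}_i^1|^p,
\]
which directly contradicts the strict inequality above. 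Hence $\bar{y}_i^2 > \bar{y}_i^1$, as required.

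The main technical moment is recognizing that summing the two strict optimality bounds causes every term in $C$ to cancel, collapsing the problem to a clean four-term inequality precisely matching the shape of Lemma~\ref{lemma:rearrangement_inequality}; the only subtle point is that we truly need the \emph{strict} version of the summed optimality bound (delivered by Lemma~\ref{lemma:beta_cant_be_same} together with strict convexity) in order to rule out the boundary case $\bar{y}_i^1 = \bar{y}_i^2$, which the rearrangement inequality alone cannot exclude.
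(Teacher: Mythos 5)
Your proof is correct and follows essentially the same route as the paper: the strict-monotonicity argument is the identical sum-swap of the two strict optimality inequalities (justified via Lemma~\ref{lemma:beta_cant_be_same} and uniqueness of the minimizer) followed by the rearrangement inequality of Lemma~\ref{lemma:rearrangement_inequality}, with your separate treatment of the case $\bar{y}_i^1 = \bar{y}_i^2$ being only a presentational difference from the paper's single weak-inequality contradiction. For continuity you invoke Berge's maximum theorem rather than the paper's citation of Corollary 7.43 of Rockafellar and Wets, but both deliver the same standard fact (continuity of the single-valued argmin of a jointly continuous, strictly convex, coercive objective), so this is a cosmetic rather than substantive difference.
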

\begin{proof}
	For \emph{continuity}, we refer to Corollary 7.43 in \citet{rockafellar2009variational}, which states that function $F(\vty) = \argmin_{\vbeta}\mathcal{L}(\vty,\vbeta)$ is single-valued and continuous on its domain, when function $\mathcal{L}: \R^{m} \times \R^{n} \rightarrow \R \cup \{-\infty, \infty\}$ is proper\footnote{A function is proper if the domain on which it is finite is non-empty.}, strictly convex, lower semi-continuous, and has $\mathcal{L}^{\infty}(\vec{0},\beta) > 0$, $\forall \vbeta \ne 0$.\footnote{$\mathcal{L}^{\infty}(\vec{0},\beta)$ is known as the horizon function of $\mathcal{L}$.} It is easy to check that our loss function given in Equation~\eqref{equation:loss_fcn} satisfies these conditions. Hence, its minimizer $\vbetast$ is continuous in $\vty$. Since $\vby = \vx\vbetast$, it follows that $\vby$ is also continuous in $\vty$.
	
	For \emph{strict monotonicity}, first note that $\byi =  \vxi^T\vbetast$. Now consider two instances of $(p, R)$-linear regression, $u$ and $w$, that differ only in agent $i$'s reported response, denoted $\tyi^u$ and $\tyi^w$, respectively in the two instances. Hence, $\tyi^u \neq \tyi^w$. Let $\vbeta^u$ and $\vbeta^w$ be the corresponding optimal regression parameters. Without loss of generality, assume $\tyi^u > \tyi^w$, and for contradiction, suppose that $\vxi^T\vbeta^w \geq  \vxi^T\vbeta^u$. Using Lemma~\ref{lemma:beta_cant_be_same}, we get that $\vbeta^u \ne \vbeta^w$. Because our strictly convex loss function has a unique minimizer, we have $\mathcal{L}(\vty^u,\vbeta^u) < \mathcal{L}(\vty^u,\vbeta^w)$ and $\mathcal{L}(\vty^w,\vbeta^w) < \mathcal{L}(\vty^w,\vbeta^u)$. Let us define $\mathcal{C}^u = \sum_{j \ne i}{ |\tyj - \vxj^T\vbeta^u|^p}$ $+ R(\vbeta^u)$ and $\mathcal{C}^w = \sum_{j \ne i}{ |\tyj - \vxj^T\vbeta^w|^p} + R(\vbeta^w)$, we get
	\begin{equation}\label{equation:monotonic_optimal_beta_u}
	|\tyi^u - \vxi^T\vbeta^u|^p + \mathcal{C}^u < |\tyi^u - \vxi^T\vbeta^w|^p + \mathcal{C}^w.
	\end{equation}
	\begin{equation}\label{equation:monotonic_optimal_beta_w}
	|\tyi^w - \vxi^T\vbeta^w|^p + \mathcal{C}^w < |\tyi^w - \vxi^T\vbeta^u|^p + \mathcal{C}^u.
	\end{equation}
	
	Adding Equations~\eqref{equation:monotonic_optimal_beta_w} and \eqref{equation:monotonic_optimal_beta_u}, we have:
	\begin{equation}\label{equation:uu_ww}
	|\tyi^u - \vxi^T\vbeta^u|^p + |\tyi^w - \vxi^T\vbeta^w|^p <
	|\tyi^u - \vxi^T\vbeta^w|^p + |\tyi^w - \vxi^T\vbeta^u|^p
	\end{equation}
	
	Note that because we assumed $\tyi^u > \tyi^w$ and $\vxi^T\vbeta^w \geq  \vxi^T\vbeta^u$, using  Lemma~\ref{lemma:rearrangement_inequality}, we get 
	
	\begin{equation*}
	|\tyi^u - \vxi^t\vbeta^w|^p + |\tyi^w - \vxi^t\vbeta^u|^p \leq |\tyi^u - \vxi^t\vbeta^u|^p + |\tyi^w - \vxi^t\vbeta^w|^p,
	\end{equation*}
	
	which contradicts Equation~\ref{equation:uu_ww}.
\end{proof}

The last lemma demonstrates that $(p,R)$-regression cannot be strategyproof. Consider an instance where each strategic agent $i$ has $y_i \notin \set{0,1}$ and these true data points do not all lie on a hyperplane. Then under honest reporting, not all strategic agents can be perfectly happy, and any agent $i$ with $\byi > y_i$ (or $\byi < y_i$) can slightly decrease (or increase) her report to achieve a strictly more preferred outcome. Next, we show that the best response of an agent is always unique and continuous in the reports of the other agents.

\begin{lemma}\label{lemma:best-response}
	For each strategic agent $i$, the following hold about the best response function $\br_i$.
	\begin{enumerate}
		\item\label{part1} The best response is unique, i.e., $|\br_i(\vty_{-i})|=1$ for any reports $\vty_{-i}$ of the other agents.
		\item $\br_i$ is a continuous function of $\vty_{-i}$.
	\end{enumerate}
\end{lemma}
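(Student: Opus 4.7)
\medskip
\noindent\textbf{Proof plan.}

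The plan is to reduce both claims to the behavior of the single-variable map $\phi(t) := f_i(t,\vty_{-i})$, the outcome that agent $i$ induces as a function of her own report. By Lemma~\ref{lemma:monotone_regression}, $\phi$ is continuous and strictly increasing on $[0,1]$, so it is a homeomorphism from $[0,1]$ onto the interval $[\phi(0),\phi(1)]$. Since $\succeq_i$ is single-peaked at $y_i$, agent $i$'s problem reduces to choosing $\tyi\in[0,1]$ so that $\phi(\tyi)$ is as close to $y_i$ as possible.

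For Part~\ref{part1} (uniqueness), I would split into three cases on the location of $y_i$ relative to $[\phi(0),\phi(1)]$. If $y_i\le\phi(0)$, then every attainable outcome satisfies $\phi(\tyi)\ge\phi(0)\ge y_i$, so by single-peakedness the agent strictly prefers smaller outcomes, and strict monotonicity of $\phi$ forces the unique best response $\tyi=0$. The case $y_i\ge\phi(1)$ is symmetric, yielding the unique best response $\tyi=1$. In the remaining case $\phi(0)<y_i<\phi(1)$, the intermediate value theorem applied to the continuous strictly increasing $\phi$ gives a unique $\tyi^\star\in(0,1)$ with $\phi(\tyi^\star)=y_i$; this achieves the peak outcome and is uniquely optimal because any other report yields $\phi(\tyi)\ne y_i$, which is strictly less preferred.

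For Part~2 (continuity), I would invoke Berge's maximum theorem. Fix any continuous single-peaked representation of $\succeq_i$, for instance $u_i(\byi)=-|y_i-\byi|$. Then the best-response problem is
\[
\br_i(\vty_{-i}) \;=\; \argmax_{\tyi\in[0,1]}\; u_i\bigl(f_i(\tyi,\vty_{-i})\bigr).
\]
The objective is jointly continuous in $(\tyi,\vty_{-i})$ by Lemma~\ref{lemma:monotone_regression} and continuity of $u_i$; the feasible set $[0,1]$ is a constant, compact-valued correspondence. Berge's theorem therefore yields an upper hemicontinuous argmax correspondence, and combined with single-valuedness (Part~\ref{part1}) this is exactly continuity of $\br_i$. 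A self-contained alternative, if one wants to avoid quoting Berge, is to take any sequence $\vty_{-i}^n\to\vty_{-i}$, extract a convergent subsequence $\br_i(\vty_{-i}^{n_k})\to\tyi^\dagger\in[0,1]$ by compactness, and use joint continuity of $f_i$ together with the uniqueness established in Part~\ref{part1} to conclude $\tyi^\dagger=\br_i(\vty_{-i})$.

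The main subtlety — and really the only place the $(p,R)$-regression structure enters — is Lemma~\ref{lemma:monotone_regression}: continuity is needed for both parts, and strict monotonicity is what rules out ``flat'' regions in $\phi$ that could otherwise create a continuum of optimal reports in the interior case or cause the best response to jump as $\vty_{-i}$ varies. Once that lemma is in hand, both uniqueness and continuity follow by standard arguments.
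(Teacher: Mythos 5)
Your proposal is correct and follows essentially the same route as the paper: the uniqueness argument via the three-case analysis on the position of $y_i$ relative to $[f_i(0,\vty_{-i}),f_i(1,\vty_{-i})]$ and the intermediate value theorem is the paper's argument verbatim, and your ``self-contained alternative'' for continuity (extract a convergent subsequence, use joint continuity and uniqueness of the minimizer) is exactly the Bolzano--Weierstrass argument the paper gives. Invoking Berge's maximum theorem is just a packaged version of that same argument, so there is no substantive difference.
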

\begin{proof}
	We first show \emph{uniqueness} of the best response. By Lemma \ref{lemma:monotone_regression}, $f_i$ is continious and strictly increasing in $\tyi$. 
	Consider the minimization problem: $\argmin_{\tyi \in [0,1]}$ ${|y_i - f_i(\tyi, \vty_{-i})|}^p$, where $\vty_{-i}$ is constant. So for now, let us consider $f_i$ to be a function of only $\tyi$. Since $\tyi \in [0,1]$, it achieves a minimum at $a = f_i(0)$ and a maximum at $b = f_i(1)$. If $a \le b \le y_i$, then the minimum of the problem is achieved at $\tyi = 1$. Symmetric case holds for $y_i \le a \le b$ where minimum is achieved at $\tyi = 0$. Lastly, if $y_i \in [a,b]$, by intermediate value theorem, $\exists \, \tyi \, \text{ s.t } \, f_i(\tyi) = y_i$, which is then the minimum. In all cases, the minimum is unique since $f_i$ is strictly increasing. We now show that this unique minimum $\tyi^*$ is indeed the unique best response. If $y_i \in [a,b]$ then reporting $\tilde{y}_i^*$ makes agent $i$ perfectly happy as her outcome matches the peak of her preference, which is clearly best response. If $y_i > b$, then $\tyi^* = 1$ and her outcome is $\byi = b$. Under any other report, her outcome would be $\byi \le b$, which cannot be more preferred. A symmetric argument holds for $y_i < a$ case. 
	
	Now we can use the uniqueness of the best response to argue its \emph{continuity}. More specifically, we want to show that $br_i(\vty_{-i}) = \argmin_{\tyi \in [0,1]} g(\tyi, \vty_{-i})$ is continuous, where $g(\tyi, \vty_{-i}) = |y_i - f_i(\tyi, \vty_{-i})|^p$ is jointly continious due to the continuity of $f_i$. We use the sequence definition of continuity. Fix a convergent sequence $\{\vty_{-i}^n\} \to \vty_{-i}$. Since there is always a unique minimum, the sequence $\{br_i(\vty_{-i}^{n})\}$ is well-defined. We want to show $\{br_i(\vty_{-i}^{n})\} \rightarrow br_i(\vty_{-i})$. By the Bolzano-Weirstrass theorem, every bounded sequence in $\R$ has a convergent sub-sequence. Therefore, this has a convergent sub-sequence $\{br_i(\vty_{-i}^{n_k})\}$ that converges to some $\theta$. Let $br_i(\vty_{-i}) = \theta^*$. We want to first show $\theta = \theta^*$. By the continuity of $g$, $\{g(\theta^*, \vty_{-i}^{n_k})\} \to g(\theta^*, \vty_{-i})$. Also by the minimum, for every individual element of the subsequence $n_k$, we have that $g(\theta^*, \vty_{-i}^{n_k})\ \geq g(br_i(\vty_{-i}^{n_k}), \tilde{y}^{n_k})$. Now again by continuity of $g$, both the above sequences converge and we have: $g(\theta^*, \vty_{-i}) \geq g(\theta, \vty_{-i}))$. Since $\theta^*$ is the unique minimizer for $\vty_{-i}$, we have that $\theta = \theta^*$. So, every convergent sub-sequence of $br_i(\vty_{-i}^n)$ converges to $br_i(\vty_{-i})$. Since this is a bounded sequence, we have that if $\{\vty_{-i}^n\} \to \vty_{-i}$, then $\{br_i(\vty_{-i}^n)\} \rightarrow br_i(\vty_{-i})$. Thus, $br_i$ is continuous.
\end{proof}

We remark that part~\ref{part1} of Lemma~\ref{lemma:best-response} is a strong result: it establishes a unique best response for every possible single-peaked preferences that an agent may have (in fact, our proof shows that this best response depends only on the peak and not on the full preferences). This allows us to avoid further assumptions on the structure of the agent preferences. 

Finally, we derive a simple characterization of pure Nash equilibria in our setting. We show that under a PNE, each strategic agent $i$ must be in one of three states: either she is perfectly happy ($\byi = y_i$), or wants to decrease her outcome ($\byi > y_i$) but is already reporting $\tyi = 0$, or wants to increase her outcome ($\byi < y_i$) but is already reporting $\tyi = 1$.

\begin{lemma}\label{lemma:PNE_conditions}
	$\vty_{\manip}$ is a pure Nash Equilibrium if and only if $(\byi < y_i \land \tyi = 1)\ \lor\ (\byi > y_i \land \tyi = 0)\ \lor \ (\byi = y_i)$ holds for all $i \in \manip$.
\end{lemma}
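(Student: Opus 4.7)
The plan is to prove both directions by working with the unique best-response structure from Lemma~\ref{lemma:best-response} together with the strict monotonicity and continuity of $f_i$ in $\tyi$ established in Lemma~\ref{lemma:monotone_regression}. The characterization is really just saying that agent $i$ is playing a best response iff either she has hit her peak or she is pushing her report to the boundary of $[0,1]$ in the correct direction.

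For the forward direction, I would argue by contrapositive. Suppose the disjunction fails for some strategic agent $i \in \manip$: then we are in one of two cases, namely $\byi < y_i$ with $\tyi < 1$, or $\byi > y_i$ with $\tyi > 0$. Consider the first case (the other is symmetric). Because $f_i(\cdot,\vty_{-i})$ is continuous and strictly increasing in $\tyi$, there exists $\tyi' \in (\tyi, 1]$ such that $\byi < f_i(\tyi',\vty_{-i}) \le y_i$; indeed, by the intermediate value theorem either the graph hits $y_i$ somewhere in $(\tyi,1]$, in which case we take that point, or $f_i(1,\vty_{-i}) < y_i$, in which case we take $\tyi' = 1$ and get $\byi < f_i(1,\vty_{-i}) < y_i$. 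In either sub-case, the new outcome lies strictly between $\byi$ and $y_i$ on the same side of the peak, so single-peakedness yields $f_i(\tyi',\vty_{-i}) \succ_i \byi$, contradicting the PNE assumption.

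For the backward direction, suppose the disjunction holds for every $i \in \manip$ and fix an arbitrary strategic agent $i$; I would check that no unilateral deviation $\tyi'$ can be strictly better in each of the three sub-cases. If $\byi = y_i$, the single-peak at $y_i$ already attains the most preferred point, so no deviation can strictly improve. If $\byi < y_i$ and $\tyi = 1$, then any deviation satisfies $\tyi' < 1$, and strict monotonicity of $f_i$ gives $f_i(\tyi',\vty_{-i}) < \byi < y_i$; since both values lie strictly below the peak and the new one is further away, single-peakedness yields $\byi \succ_i f_i(\tyi',\vty_{-i})$. The case $\byi > y_i$ with $\tyi = 0$ is entirely symmetric. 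Since no $i \in \manip$ has a profitable deviation, $\vty_\manip$ is a PNE.

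I do not foresee a serious obstacle: the only mildly delicate point is the intermediate-value step in the forward direction, where I must be careful to distinguish whether $f_i$ crosses $y_i$ inside $[\tyi,1]$ or stays below it, so that the deviation $\tyi'$ is guaranteed to exist in $[0,1]$ and to produce a strictly better outcome. Beyond that, everything follows directly from the continuity, strict monotonicity, and single-peakedness already in hand.
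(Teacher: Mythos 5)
Your proof is correct and follows essentially the same route as the paper's: the ``if'' direction is verified case by case using the strict monotonicity of $f_i$ in $\tyi$ from Lemma~\ref{lemma:monotone_regression}, and the ``only if'' direction exhibits a profitable deviation via continuity (the paper uses a small-$\epsilon$ perturbation of the report where you invoke the intermediate value theorem, but these are the same idea). No gaps.
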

\begin{proof}
	For the `if' direction, we check that in each case, agent $i \in \manip$ cannot change her report to attain a strictly better outcome. When $\byi < y_i$ and $\tyi = 1$, every other report $\tilde{y}'_i < \tyi = 1$ will result in an outcome $\bar{y'}_i < \byi < y_i$ (Lemma~\ref{lemma:monotone_regression}), which the agent prefers even less. A symmetric argument holds for the $\byi > y_i$ and $\tyi = 0$ case. Finally, when $\byi = y_i$, the agent is already perfectly happy. 
	
	For the `only if' direction, suppose $\vty_{\manip}$ is a PNE. Consider agent $i \in \manip$. The only way the condition is violated is if $\byi < y_i$ and $\tyi \neq 1$ or $\byi > y_i$ and $\tyi \neq 0$. In the former case, Lemma~\ref{lemma:monotone_regression} implies that for a sufficiently small $\epsilon > 0$, agent $i$ increasing her report to $\tilde{y}'_i = 1+\epsilon$ must result in an outcome $\bar{y'}_i \in (\byi,y_i]$, which the agent strictly prefers over $\byi$. This contradicts the assumption that     $\vty_{\manip}$ is a PNE. A symmetric argument holds for the second case.
\end{proof}

Note that Lemma~\ref{lemma:PNE_conditions} immediately implies a na\"{\i}ve but simple algorithm to find a pure Nash equilibrium. Since $\tyi \in \set{0,y_i,1}$ for each $i$, this induces $3^m$ possible $\vty_{\manip}$ vectors. For each such vector, we can compute the outcome of the mechanism $\vby$, and check whether the conditions of Lemma~\ref{lemma:PNE_conditions} are satisfied. This might lead one to believe that the strategic game that we study is equivalent to the finite game induced by the $3^m$ possible strategy profiles. However, this is not true because limiting the strategy set of the agents can give rise to new equilibria which are not equilibria of the original game. We give an explicit example illustrating this below. We further discuss the issue of computing a PNE in Section~\ref{section:implementation_and_experiments}.

\paragraph{Example 1: Finite game leading to different equilibria.}  We use an example from 1D facility location with the average rule --- recall that this is a special case of linear regression --- to illustrate this point.
Consider an example with two agents $1$ and $2$ with true points $y_1 = 0.4$ and $y_2 = 0.5$, respectively, whose preferences are such that each agent $i$ strictly prefers outcome $\by^1$ to $\by^2$ when $|\by^1-y_i| < |\by^2-y_i|$. 

If the agents are allowed to report values in the range $[0,1]$, then the unique PNE of the game is agent $1$ reporting $\ty_1 = 0$ and agent $2$ reporting $\ty_2=1$, and the PNE outcome is $\by=0.5$. 

Now, consider the version with finite strategy spaces, where each agent $i$ must report $\ty_i \in \set{0,1,y_i}$. Suppose the agents report honestly, i.e., $\vty = \vy = (0.4,0.5)$. Then, the outcome is $\by = 0.45$. The only way agent $1$ could possibly improve is by reporting $0$, but in that case the outcome would be $\by=0.25$, increasing $|\by-y_1|$. A similar argument holds for agent $2$. Hence, honest reporting is a PNE of the finite game, but not of the original game.

 \subsection{Analysis of Pure Nash Equilibria}\label{section:regression_analysis}
We are now ready to prove the main results of our work. We begin by showing that a PNE always exists, generalizing the first statement of Theorem~\ref{theorem:1d} from 1D facility allocation to linear regression.

\begin{theorem}\label{theorem:PNE_existence_regression}
	For $p > 1$ and convex regularizer $R$, the $(p, R)$-regression algorithm admits a pure Nash Equilibrium.
\end{theorem}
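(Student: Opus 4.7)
The plan is to prove existence of a PNE by a direct application of Brouwer's fixed-point theorem to the joint best-response map, using the properties of best responses already established in Lemma~\ref{lemma:best-response}.

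First, I would define the joint best-response correspondence $B: [0,1]^m \to [0,1]^m$ on the space of strategic agents' reports by
$$B(\vty_{\manip}) = \bigl(\br_1(\vty_{-1}),\, \br_2(\vty_{-2}),\, \ldots,\, \br_m(\vty_{-m})\bigr),$$
where each $\vty_{-i}$ is understood to include both the reports of the other strategic agents taken from $\vty_{\manip}$ and the (fixed) honest reports $\vy_\honest$. Observe that $[0,1]^m$ is non-empty, compact, and convex, and that $B$ maps $[0,1]^m$ into itself since each best response lies in $[0,1]$ by assumption on the reporting space.

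Next, I would verify that $B$ is a (single-valued) continuous function. Single-valuedness follows from part~1 of Lemma~\ref{lemma:best-response}: for every agent $i$ and every $\vty_{-i}$, $\br_i(\vty_{-i})$ is a unique point in $[0,1]$. Continuity of each coordinate map $\vty_{\manip} \mapsto \br_i(\vty_{-i})$ follows from part~2 of Lemma~\ref{lemma:best-response}, because the projection $\vty_{\manip} \mapsto \vty_{-i}$ is continuous and the composition of continuous maps is continuous. Hence $B$, as a product of continuous scalar-valued maps, is continuous on $[0,1]^m$.

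With these ingredients, Brouwer's fixed-point theorem yields a point $\vhy_{\manip} \in [0,1]^m$ with $B(\vhy_{\manip}) = \vhy_{\manip}$. By the definition of $B$, this means that for every $i \in \manip$, $\hat{y}_i = \br_i(\vhy_{-i})$, i.e., each strategic agent is best-responding to the others' reports. By the definition of the best response, no agent can strictly improve her outcome by unilaterally deviating, so $\vhy_{\manip} \in \NE_f(\vy)$, which establishes existence of a PNE.

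The only step that required real work is continuity and single-valuedness of the best-response map, which was already discharged in Lemma~\ref{lemma:best-response}; given that lemma, the main theorem reduces to a one-line application of Brouwer. The subtlety lurking in the background — and the reason we needed strict monotonicity of $f_i$ in $\tyi$ (Lemma~\ref{lemma:monotone_regression}) earlier — is to guarantee that the argmin defining $\br_i$ is actually a singleton, since Berge-style arguments for continuity would otherwise only yield upper hemicontinuity of a correspondence and would force us to invoke Kakutani instead of Brouwer.
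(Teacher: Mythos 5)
Your proposal is correct and follows essentially the same route as the paper's own proof: both define the joint best-response map on $[0,1]^m$, invoke the uniqueness and continuity of best responses from Lemma~\ref{lemma:best-response}, and conclude via Brouwer's fixed-point theorem that a fixed point (hence a PNE) exists. Your closing remark about why single-valuedness lets one avoid Kakutani is a nice observation, but the argument itself matches the paper's.
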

\begin{proof}
	Consider the mapping $T$ from the reports of strategic agents to their best responses, i.e.,  $T(\tilde{y}_1,\ldots,\tilde{y}_m) = (\br_1(\vty_{-1}),\ldots,$ $\br_m(\vty_{-m}))$. Recall that best responses are unique due to Lemma~\ref{lemma:best-response}. Also, note that pure Nash equilibria are precisely fixed points of this mapping. 
	
	Brouwer's fixed point theorem states that any continuous function from a convex compact set to itself has a fixed point~\cite{pugh2003real}. Note that $T$ is a function from $[0,1]^m$ to $[0,1]^m$, and $[0,1]^m$ is a convex compact set. Further, $T$ is a continuous function since each $\br_i$ is a continuous function (Lemma~\ref{lemma:best-response}). Hence, by Brouwer's fixed point theorem, $T$ has a fixed point (i.e. pure Nash equilibrium).
\end{proof}

Next, we show that there is a unique pure Nash equilibrium outcome (i.e. all pure Nash equilibria lead to the same hyperplane $\vbetast$), generalizing the second statement in Theorem~\ref{theorem:1d}.

\begin{theorem}\label{theorem:unique_outcome}
	For any $p > 1$ and convex regularizer $R$, the $(p,R)$-regression algorithm has a unique pure Nash equilibrium outcome. 
\end{theorem}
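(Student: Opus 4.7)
The plan is a proof by contradiction. Suppose two pure Nash equilibria $\vty_{\manip}^{1}$ and $\vty_{\manip}^{2}$ yield distinct optimal regression coefficients $\vbeta^1 \neq \vbeta^2$, with corresponding outcomes $\byi^k = \vxi^T \vbeta^k$ for $k \in \{1,2\}$. Let $\mathcal{L}^k$ denote the strictly convex loss from Equation~\eqref{equation:loss_fcn} evaluated on the report vector $\vty^k$. Since $\vbeta^k$ is the unique minimizer of $\mathcal{L}^k$, we have the two strict inequalities $\mathcal{L}^1(\vbeta^1) < \mathcal{L}^1(\vbeta^2)$ and $\mathcal{L}^2(\vbeta^2) < \mathcal{L}^2(\vbeta^1)$. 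Adding them and noting that the honest-agent terms and the regularizer $R(\vbeta)$ are identical across the two instances and cancel, I obtain
$$\sum_{i \in \manip} \bigl[\, |\tyi^1 - \byi^1|^p + |\tyi^2 - \byi^2|^p \,\bigr] \;<\; \sum_{i \in \manip} \bigl[\, |\tyi^1 - \byi^2|^p + |\tyi^2 - \byi^1|^p \,\bigr].$$

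The crux is to reverse this inequality using Lemma~\ref{lemma:PNE_conditions} and Lemma~\ref{lemma:rearrangement_inequality}. I would show that for every strategic agent $i$, the differences $\tyi^1 - \tyi^2$ and $\byi^1 - \byi^2$ are \emph{oppositely ordered} (allowing equality). Concretely, whenever $\byi^1 > \byi^2$, a case analysis on the three possible PNE states of $i$ at each equilibrium forces $\tyi^1 \le \tyi^2$: if $\byi^1 = y_i$, then $\byi^2 < y_i$ so the PNE condition forces $\tyi^2 = 1$; if $\byi^1 > y_i$, the PNE condition forces $\tyi^1 = 0$ and the claim is immediate; and if $\byi^1 < y_i$ with $\tyi^1 = 1$, then $\byi^2 < \byi^1 < y_i$ again forces $\tyi^2 = 1$. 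The case $\byi^1 < \byi^2$ is handled symmetrically. Thus, for each $i$, the right-hand sum above pairs $\{\tyi^1,\tyi^2\}$ with $\{\byi^1,\byi^2\}$ in sorted order, while the left-hand sum pairs them in crossed order.

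Lemma~\ref{lemma:rearrangement_inequality} then yields, for each $i \in \manip$,
$$|\tyi^1 - \byi^2|^p + |\tyi^2 - \byi^1|^p \;\le\; |\tyi^1 - \byi^1|^p + |\tyi^2 - \byi^2|^p,$$
and summing over $i \in \manip$ directly contradicts the strict inequality obtained from convexity. Hence $\vbeta^1 = \vbeta^2$, and the PNE outcome is unique. The main obstacle is the middle step: one must carefully enumerate the nine combinations of PNE states across the two equilibria to verify the claimed anti-monotonicity between $\tyi^k$ and $\byi^k$, particularly in the corner cases where an agent is perfectly happy at one equilibrium while pegged at a boundary report in the other.
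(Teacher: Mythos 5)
Your proposal is correct and follows essentially the same route as the paper: the double strict-optimality inequality, cancellation of the honest/regularizer terms, the PNE characterization of Lemma~\ref{lemma:PNE_conditions} to establish that each strategic agent's report and outcome move in weakly opposite directions across the two equilibria, and then Lemma~\ref{lemma:rearrangement_inequality} applied agentwise to reverse the summed inequality. Your uniform case analysis handles the same boundary cases (perfectly happy at one equilibrium, pegged at $0$ or $1$ at the other) that the paper treats by partitioning agents into the sets $\mathcal{A}$, $\mathcal{D}$, and $\mathcal{S}$.
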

\begin{proof}
	Assume by contradiction that there are two equilibria $\vty^1$ and $\vty^2$, which result in distinct outcomes $\vbeta^1$ and $\vbeta^1$, respectively. By Lemma~\ref{lemma:PNE_conditions}, any agent $i$ with $y_i > \max(\by_i^1,\by_i^2)$ or $y_i < \min(\by_i^1,\by_i^2)$ must have the same report in both cases. Similarly, any agent $i$ with $\by_i^2 < y_i < \by_i^1$ must have $\tyi^1 = 0$ and $\tyi^2 = 1$. A symmetric case holds for agents $i$ with $\by_i^1 < y_i < \by_i^2$. Lastly, any agent $i$ with $y_i = \byi^2 < \byi^1$ must have $\tilde{y}_i^2 \in [0,1]$ and $\tilde{y}_i^1 = 0$. Similar arguments hold for the remaining symmetric cases. In all such instances, we note that agents change their reports weakly in the opposite direction to their respective projections. If only one agent changed, Lemma~\ref{lemma:monotone_regression} shows that it leads to a contradiction. We rely on a similar technique to show that multiple agents changing also leads to a contradiction. Note that the only exception to this are agents $k \in \mathcal{D}$, whose preference lies on both hyperplanes (i.e. on their intersection).
	
	Let $\mathcal{A}$ be the set of points who change their reports weakly in the opposite direction as their projections, $\mathcal{D}$ as defined above, and $\mathcal{S}$, the remaining agents who either do not change or are honest. Recall $\byi = \bm{x_i}^T\vbeta$. Then $\forall \, k \, \in \, \mathcal{D} \, , \vxk^T\vbeta^1 = \vxk^T\vbeta^2$ and $\forall \, i \, \in \, \mathcal{A}$:
	\begin{equation}\label{equation:unique_relationship}
	\left(\tyi^1 \geq \tyi^2 \Rightarrow \vxi^T\vbeta^2 \geq \vxi^T\vbeta^1\right) \, \wedge \, \left(\tyi^2 \geq \tyi^1 \Rightarrow \vxi^T\vbeta^1 \geq \vxi^T\vbeta^2\right).
	\end{equation}
	
	Let $\mathcal{C}^1 = \sum_{j \in \mathcal{S}}{ |\tyj - \vxj^T\vbeta^1|^p} + R(\vbeta^1)$ and $\mathcal{C}^2 = \sum_{j \in \mathcal{S}}{ |\tyj - \vxj^T\vbeta^2|^p}$ $+ R(\vbeta^2)$. Noting that $\vbeta^1$ and $\vbeta^2$ uniquely minimize the loss for instances $1$ and $2$, respectively, and $\vbeta^1 \ne \vbeta^2$, we have:
	\begin{align*}
	\sum_{i \in \mathcal{A}}{|\tyi^1 - \vxi^T\vbeta^1|^p} + 
	\sum_{k \in \mathcal{B}}{|\tyk^1 - \vxk^T\vbeta^1|^p} + \mathcal{C}^1 < \sum_{i \in \mathcal{A}}{|\tyi^1 - \vxi^T\vbeta^2|^p} + 
	\sum_{k \in \mathcal{B}}{|\tyk^1 - \vxk^T\vbeta^2|^p} + \mathcal{C}^2,
	\end{align*}
	and
	\begin{align*}
	\sum_{i \in \mathcal{A}}{|\tyi^2 - \vxi^T\vbeta^2|^p} + 
	\sum_{k \in \mathcal{B}}{|\tyk^2 - \vxk^T\vbeta^2|^p} + \mathcal{C}^2 < \sum_{i \in \mathcal{A}}{|\tyi^2 - \vxi^T\vbeta^1|^p} + 
	\sum_{k \in \mathcal{B}}{|\tyk^2 - \vxk^T\vbeta^1|^p} + \mathcal{C}^1.
	\end{align*}
	
	Adding two equations above, we have
	\begin{equation}\label{equation:uniqueness_optimal_sum}
	\sum_{i \in \mathcal{A}}{\left\{|\tyi^1 - \vxi^T\vbeta^1|^p + |\tyi^2 -     \vxi^T\vbeta^2|^p\right\}} < 
	\sum_{i \in \mathcal{A}}{\left\{|\tyi^1 - \vxi^T\vbeta^2|^p + |\tyi^2 -     \vxi^T\vbeta^1|^p\right\}}.
	\end{equation}
	
	Due to Equation~\eqref{equation:unique_relationship}, when we apply Lemma~\ref{lemma:rearrangement_inequality} to each $i \in \mathcal{A}$:
	\begin{equation*}
	|\tyi^1 - \vxi^T\vbeta^2|^p + |\tyi^2 - \vxi^T\vbeta^1|^p \leq 
	|\tilde{y}_i^1 - \bm{x_i}^T\beta^*_1|^p + |\tyi^2 - \vxi^T\vbeta^2|^p.
	\end{equation*}
	
	Thus adding this up for all $i$, we have:
	\begin{equation*}
	\begin{split}
	\sum_{i \in \mathcal{A}}{\left\{|\tyi^1 - \vxi^T\vbeta^2|^p + |\tyi^2 -     \vxi^T\vbeta^1|^p\right\}} \leq 
	\sum_{i \in \mathcal{A}}{\left\{|\tyi^1 - \vxi^T\vbeta^1|^p + |\tyi^2 - \vxi^T\vbeta^2|^p\right\}},
	\end{split}
	\end{equation*}
	which contradicts Equation~\eqref{equation:uniqueness_optimal_sum}.
\end{proof}

While the result above illustrates that the PNE outcome is unique, the equilibrium strategy may not be. This stems from different sets of reports mapping to the same regression hyperplane. In the simplest case, consider the ordinary least squares (OLS) with no regularization, i.e., the $(2,0)$-regression, where all $n$ agents are strategic. Given $\vx \in \R^{d \times n}$, the OLS produces a linear mapping from the reports $\vty$ to the outcomes $\vby$ given by $\vH \vty = \vby$, where $\vH = \vx(\vx^T\vx)^{-1}\vx^T \in \R^{n \times n}$ is a symmetric idempotent matrix of rank $d$ (known as the hat matrix). When $n > d$, $\vH$ is singular, leading to infinitely many $\vty$ which map to the same $\vby$. Of course, they need to still satisfy the conditions of being a PNE (Lemma~\ref{lemma:PNE_conditions}). For a concrete example, if the $n$ true data points lie on a hyperplane, any of the infinitely many reports $\vty$ under which OLS returns this hyperplane --- making all $n$ agents perfectly happy --- is a PNE. 

Given the linear structure of OLS, one wonders if our results can be extended to all \emph{linear mappings}. We say a game is induced by a linear mapping if a matrix $\vH$ relates the agents' outcomes $\vby$ to their reports $\vty$ by the equation $\vH \vty = \vby$. When $\vH$ is a hat matrix arising from OLS, Theorems~\ref{theorem:PNE_existence_regression} and~\ref{theorem:unique_outcome} show that the induced game admits a PNE with a unique outcome. Interestingly, it is easy to show that the proof of Theorem~\ref{theorem:PNE_existence_regression} (existence of PNE) can be extended to all matrices $\vH$. However, there are matricies for which the corresponding game has multiple PNE outcomes. We give an example below. It is an interesting open question to identify the precise conditions on $\vH$ for the induced game to satisfy Theorem~\ref{theorem:unique_outcome} and thus have a unique PNE outcome.

\paragraph{Example 2: Multiple PNE Outcomes in General Linear Mappings} Consider the following matrix: 
\[
\vH =
\begin{bmatrix}
0.8 & -1 \\
-1.2 & 1
\end{bmatrix}
\]
Suppose the agents' preferred values are given by $\vy = (0,0)$. Then, when they report $\vty = (0,0)$, the outcome is $\vby=(0,0)$. This is clearly a PNE as both agents are perfectly happy. When they report $\vty = (1,1)$, the outcome is $\vby = (-0.2,-0.2)$. While neither agent is perfectly happy as the outcome is lower than their preferred value, neither can increase their outcome because they are already reporting $1$. Hence, this is also a PNE with a different outcome.

 \subsection{Connection to Strategyproofness}\label{section:strategyproofness}
A social choice rule maps true preferences of the agents ($\vy$) to a socially desirable outcome ($\vby$ or $\vbetast$). Strategyproofness is a strong requirement: when $f$ is strategyproof, honest reporting is a \emph{dominant strategy} for each agent (i.e., it is an optimal strategy regardless of the strategies of other agents). We say that rule $f$ is \emph{implementable in dominant strategies} if there exists a rule $g$ such that $f(\vy)$ is a dominant strategy outcome under $g$. Although a seemingly weaker requirement (since for a strategyproof rule $f$, one can set $g=f$), the classic revelation principle argues otherwise: if $f$ can be implemented in dominant strategies, then directly eliciting agents' preferences and implementing $f$ must be strategyproof. 

A weaker requirement is that $f$ be \emph{Nash-implementable}, i.e., there exists $g$ such that the Nash equilibrium outcome under $g$ is $f(\vy)$.\footnote{This is weaker because for a strategyproof rule $f$, $f(\vy)$ is a dominant strategy equilibrium outcome (and thus also a Nash equilibrium outcome) under $f$ itself.} Generally, not every Nash-implementable rule is strategyproof. However, a classic line of work in economics~\cite{roberts1979characterization,dasgupta1979implementation,laffont1982nash} proves that Nash-implementable rules are strategyproof for ``rich'' preference domains. It is easy to check that our domain with single-peaked preferences does not satisfy their ``richness'' condition. For single-peaked preferences, we noted in Section~\ref{section:1d} that \citet{yamamura2013generalized} proved such a result in 1D facility location for a family of algorithms with unique PNE outcomes. We extend this to the more general linear regression setting. At this point, we make two remarks. First, the result we establish is stronger than the revelation principle (albeit in this specific domain) as it ``converts'' Nash-implementability (rather than the stronger dominant-strategy-implementability) into strategyproofness. Second, the result of \citet{yamamura2013generalized} for 1D facility location relied on the analytical form of the PNE outcome, so strategyproofness could be explicitly checked. However, the analytical form of the PNE outcome is unknown in the linear regression setting, requiring an indirect argument to establish strategyproofness.
 
We note that our result actually applies to a even broader setting than linear regression: specifically, it applies to any function $f:[0,1]^m \to \R^m$ which has a unique PNE outcome and satisfies an additional condition (stated in the next theorem). We believe that this could have further implications in the theory about implementability of rules, and may be of independent interest. Lastly, as noted by \citet{chen2018strategyproof}, strategyproof mechanisms for linear regression are scarce. This result introduces a new parametric family of strategyproof mechanisms: for given $(p,R)$, the corresponding strategyproof mechanism outputs the unique PNE outcome of $(p,R)$-regression.  

\begin{theorem}\label{theorem:NE_is_strategyproof}
	Let $M$ be a set of agents with $|M|=m$. Each agent $i$ holds a private $y_i \in [0,1]$. Let $f$ be a function which elicits agent reports $\vty \in [0,1]^m$ and returns an outcome $\vby \in \R^m$. Each agent $i$ has single-peaked preferences over $\byi$ with peak at $y_i$. Suppose the following are satisfied:
	\begin{enumerate}
		\item For each $i \in M$ and each $\vty_{-i} \in [0,1]^{m-1}$, $\byi = f_i(\tyi,\vty_{-i})$ is continuous and strictly increasing in $\tyi$. 
		\item For each $\vy \in [0,1]^m$ and each $T \subseteq M$, $f$ has a unique pure Nash equilibrium outcome when agents in $T$ report honestly and agents in $M\setminus T$ strategize. 
	\end{enumerate}
	For $\vy \in [0,1]^m$, let $h(\vy)$ denote the unique pure Nash equilibrium outcome under $f$ when all agents strategize. Then, $h$ is strategyproof. 
\end{theorem}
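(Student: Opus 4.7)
The plan is to argue by contradiction. If $h$ were not strategyproof, some agent $i$ with true peak $y_i$ could strictly gain by reporting $\tilde y_i \neq y_i$ instead: writing $\vby = h(\vy)$ and $\vby' = h(\tilde y_i, \vy_{-i})$, we would have $|\by'_i - y_i| < |\by_i - y_i|$, so in particular $\vby \neq \vby'$. Let $\vty^A$ and $\vty^B$ be PNEs of $f$ (all agents strategizing) yielding $\vby$ and $\vby'$ respectively.

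The main device is a carefully chosen hypothetical game. For any value $c \in [0,1]$, let $G_c$ denote the game with peak vector $(c, \vy_{-i})$ and $T = \{i\}$: agent $i$ is honest (reporting $c$) while the remaining agents strategize with peaks $\vy_{-i}$. Condition~2 guarantees $G_c$ has a unique PNE outcome. The central observation is that a PNE $\vty^*$ of the all-strategizing game in which $\tilde y^*_i = c$ is automatically a PNE of $G_c$: agent $i$'s honest report in $G_c$ is $c$ by definition, and the best-response conditions for agents $j \neq i$ depend only on their peaks $\vy_{-i}$ and on agent $i$'s fixed report $c$ (condition~1), so they are unchanged. Therefore, whenever $\tilde y^A_i = \tilde y^B_i$, both $\vty^A$ and $\vty^B$ are PNEs of the same game $G_{\tilde y^A_i}$ with distinct outcomes, immediately contradicting condition~2.

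To exploit this observation, I would first establish the analogue of Lemma~\ref{lemma:PNE_conditions} in the abstract setting (condition~1 suffices for the proof): in any PNE, each strategic agent is in one of three states---(a)~outcome equals peak, (b)~outcome strictly above peak with report $0$, or (c)~outcome strictly below peak with report $1$. The strict-preference assumption rules out state~(a) for agent $i$ in game $A$, forcing $\tilde y^A_i \in \{0, 1\}$. If agent $i$ is in the same extreme state in game $B$, i.e.\ $\tilde y^B_i = \tilde y^A_i$, the hypothetical-game argument above closes the proof.

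The remaining case, where agent $i$'s equilibrium reports differ across games $A$ and $B$, is the main obstacle. Here I would use a continuous deformation, parametrizing games by $\vy(t) = (y_i + t(\tilde y_i - y_i), \vy_{-i})$ for $t \in [0,1]$. Condition~2 together with Berge's maximum theorem (using continuity of best responses from condition~1) ensures that the unique PNE outcome $\vby(t)$ varies continuously in $t$, even though the equilibrium strategy profiles themselves may jump at state transitions. By tracking agent $i$'s state along this path and combining the three-state characterization with the continuity of $\by_i(t)$, I would identify an intermediate $t^*$ at which a strategy alignment occurs that reduces the problem to the easy case above. This last step is delicate precisely because strategy discontinuities are compatible with outcome continuity, and this is what makes the argument qualitatively more subtle than the one-dimensional facility-location case of \citet{yamamura2013generalized}.
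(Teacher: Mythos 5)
Your opening reduction and your ``easy case'' are sound, and the device of the game $G_c$ (agent $i$'s report frozen at $c$, the others strategizing with peaks $\vy_{-i}$) is exactly the right object: it is what the paper's proof packages as the function $g_i$, where $g_i(c)$ is agent $i$'s coordinate of the unique PNE outcome of $G_c$. But the ``remaining case'' is where the entire content of the theorem lives, and your sketch does not close it. What is actually needed there is a structural analysis of $g_i$ itself: writing $a = g_i(0)$ and $b = g_i(1)$, one must show (i) $a \le b$, (ii) $g_i(\lambda) \in [a,b]$ for every $\lambda$, and (iii) $g_i$ is onto $[a,b]$. The paper proves each of these by instantiating condition~2 with a cleverly chosen \emph{hypothetical peak} for agent $i$: for instance, if some $\hat\lambda$ had $g_i(\hat\lambda) = k < a$, then for a hypothetical agent $i$ with peak $k$ both $\lambda = 0$ (outcome above peak, report $0$) and $\lambda = \hat\lambda$ (outcome equals peak) would satisfy the equilibrium condition, yielding two PNEs of the all-strategizing game whose outcomes for agent $i$ differ ($a$ versus $k$), contradicting uniqueness. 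With (i)--(iii) in hand, strategyproofness is immediate and \emph{direct}: truthful reporting gives agent $i$ the point $y_i$ itself when $y_i \in [a,b]$ and the nearer endpoint otherwise, while any misreport still lands in $[a,b]$; no contradiction, case split on $\tyi^A$ versus $\tyi^B$, or deformation is needed.

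The deformation does not substitute for this analysis. Continuity of $t \mapsto \vby(t)$ is not what Berge's maximum theorem gives (a PNE is a fixed point of the best-response map, not a maximizer), though it can be recovered from a closed-graph/compactness argument together with uniqueness of the outcome; but even granting it, continuity of the outcome path says nothing about the existence of your $t^*$. The equilibrium report $\lambda^*(t)$ of agent $i$ can jump from $1$ directly to an interior value at which agent $i$ is perfectly happy relative to the hypothetical peak $y_i(t)$, without ever producing two PNEs of a single $G_c$ with distinct outcomes; ruling out that agent $i$'s outcome improves along such a path is precisely properties (i) and (ii) again. So the missing idea is not a better homotopy but the range analysis of $g_i$ obtained from condition~2 applied to hypothetical peaks. (A smaller issue: you phrase profitable deviation as $|\bar{y}'_i - y_i| < |\byi - y_i|$, but the theorem assumes only single-peaked preferences, not distance-based ones, so the argument should be phrased in terms of $\succ_i$.)
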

\begin{proof}
	Let $\vy$ denote the true peaks of agent preferences. To show that $h$ is strategyproof, we need to show that each agent $i$ weakly prefers reporting her true $y_i$ to any other $y'_i$, regardless of the reports $\vy'_{-i}$ submitted to $h$ by the other agents. Fix $\vy'_{-i}$. Let $h_i$ denote the outcome of $h$ for agent $i$. We want to show that $h_i(y_i,\vy'_{-i}) \succeq_i h_i(y'_i, \vy'_{-i})$ for all $y'_i \in [0,1]$. 
	
	Note that $h(y'_i,\vy'_{-i})$ finds the unique PNE outcome under $f$ in the hypothetical scenario where the agents' preferences have peaks at $\vy'$, as opposed to the real scenario in which the peaks are at $\vy$. Let us define a helper function $g_i:[0,1] \to \R$ such that $g_i(\lambda)$ returns the unique PNE outcome for agent $i$ under $f$, when the report of agent $i$ is fixed to $\lambda$ and the other agents strategize according to their preferences $\vy'_{-i}$ and reach equilibrium (this is well-defined due to condition 2 of the theorem). Note that this is independent of agent $i$'s preferences as we fixed her report to $\lambda$. Let $\vhy_{-i}$ be an equilibrium strategy of the other agents in this case. Then, $(\lambda,\vhy_{-i})$ is a PNE under $f$ for all $m$ agents with preferences $\vy'$ if and only if agent $i$ is happy with reporting $\lambda$. The other agents are already happy given agent $i$'s report. Using condition $1$ of the theorem and an argument similar to  Lemma~\ref{lemma:PNE_conditions}, this is equivalent to 
	\begin{equation}\label{eqn:lambda-condition}
	(g_i(\lambda) > y'_i \wedge \lambda = 0) \vee (g_i(\lambda) < y'_i \wedge \lambda = 1) \vee (g_i(\lambda) = y'_i)
	\end{equation}
	
	By condition $2$ of the theorem, we know that for each $y'_i \in [0,1]$, there exists a unique $\lambda^*(y'_i)$ satisfying Equation~\eqref{eqn:lambda-condition}. Note that $h_i(y'_i,\vy'_{-i}) = g_i(\lambda^*(y'_i))$. Using this, we can derive three key properties of the function $g_i$. Let $a = g_i(0)$ and $b = g_i(1)$. 
	\begin{itemize}
		\item $\bm{a \leq b:}$ Assume for contradiction that $a > b$. Choose $y'_i \in (b,a)$. Note that $\lambda = 0$ implies $g_i(\lambda) = a > y'_i$, which satisfies the first clause of Equation~\eqref{eqn:lambda-condition}, while $\lambda = 1$ implies $g_i(\lambda) = b < y'_i$, which satisfies the second clause of Equation~\eqref{eqn:lambda-condition}. Hence, both $\lambda = 0$ and $\lambda = 1$ satisfy Equation~\eqref{eqn:lambda-condition}, which is a contradiction, since $\lambda^*$ is unique. 
		
		\item $\bm{\forall \lambda \in [0,1], g_i(\lambda) \in [a,b]:}$ Assume for contradiction that there exists $\hat{\lambda} \in [0,1]$ such that $g_i(\hat{\lambda}) \notin [a,b]$. WLOG, assume $g_i(\hat{\lambda}) = k < a$ (hence, $\hat{\lambda} \neq 0$). Choose $y'_i = k$. Note that $\lambda = 0$ implies $g(\lambda) = a > k = y'_i$, which satisfies the first clause of Equation~\eqref{eqn:lambda-condition}. Similarly, for $\lambda = \hat{\lambda}$, we have $g_i(\hat{\lambda}) = k = y'_i$, which satisfies the third clause of Equation~\eqref{eqn:lambda-condition}. Hence, both $\lambda = 0$ and $\lambda = \hat{\lambda} \neq 0$ satisfy Equation~\eqref{eqn:lambda-condition}, which is a contradiction. 
		
		\item $\bm{g_i: [0,1] \rightarrow [a,b]}$ \textbf{is surjective/onto:} Assume for contradiction that there exists $\exists c \in (a,b)$ such that $g(\lambda) \neq c$ for any $\lambda \in [0,1]$. Choose $y'_i = c$. Hence, there is no $\lambda$ satisfying the third clause in Equation~\eqref{eqn:lambda-condition}. We see that for $\lambda = 0$, we have $g_i(\lambda) = a < c$, which violates the first clause. Similarly, for $\lambda = 1$, we have $g_i(\lambda) = b > c$, which violates the second clause. Hence, there is no $\lambda$ satisfying Equation~\eqref{eqn:lambda-condition}, which is again a contradiction. 
	\end{itemize}
	
	We are now ready to show that $h_i(y_i,\vy'_{-i}) = g_i(\lambda^*(y_i)) \succeq_i g_i(\lambda^*(y'_i)) = h_i(y'_i, \vy'_{-i})$ for all $y'_i \in [0,1]$. If $y_i \in [a,b]$, then it is easy to see that $\lambda^*(y_i)$ is the unique value which satisfies $g_i(\lambda^*(y_i)) = y_i$ (this exists because $g_i$ is onto). That is, in the equilibrium where agent $i$ reports her true preference, she is perfectly happy. If $y_i < a$, then it is easy to check that $\lambda^*(y_i) = 0$ satisfies Equation~\eqref{eqn:lambda-condition}, and we have $g_i(\lambda^*(y_i)) = a$. Since $g_i(\lambda^*(y'_i)) \in [a,b]$ for any $y'_i$, she will not strictly prefer this outcome. A symmetric argument holds for the $y_i > b$ case. This establishes strategyproofness of $h$.
\end{proof}

\begin{corollary}\label{corollary:regression-PNE-SP}
	Let $f$ denote the $(p,R)$-regression algorithm with $p > 1$ and convex regularizer $R$. Then, there exists a strategyproof algorithm $h$ such that $\forall \vy \in [0,1]^m$ and $\vhy \in \NE_f(\vy)$, $f(\vhy) = h(\vy)$.
\end{corollary}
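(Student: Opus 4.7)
My plan is to verify that the $(p,R)$-regression algorithm $f$ satisfies the two hypotheses of Theorem~\ref{theorem:NE_is_strategyproof}, and then invoke that theorem directly. The function $h$ produced by Theorem~\ref{theorem:NE_is_strategyproof} is defined as the map sending a preference profile $\vy \in [0,1]^m$ to the unique PNE outcome of $f$ when all agents in $\manip$ strategize according to $\vy$. By Theorem~\ref{theorem:unique_outcome} this map is well-defined, and by construction it satisfies $f(\vhy) = h(\vy)$ for every $\vhy \in \NE_f(\vy)$, so only strategyproofness of $h$ remains to be shown.

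For condition~1 of Theorem~\ref{theorem:NE_is_strategyproof}, I would cite Lemma~\ref{lemma:monotone_regression}, which establishes that for any strategic agent $i$ and any fixed reports $\vty_{-i}$, the outcome $\byi = f_i(\tyi,\vty_{-i})$ is both continuous and strictly increasing in $\tyi$. For condition~2, I need to show that for every preference profile $\vy \in [0,1]^m$ and every subset $T \subseteq \manip$ of ``honest'' strategic agents, the game in which agents in $T$ report their true peaks and agents in $\manip \setminus T$ strategize admits a unique PNE outcome. The key observation is that freezing the reports of $T$ at $\vy_T$ simply produces another $(p,R)$-regression instance, whose honest set is now $\honest \cup T$ and whose strategic set is $\manip \setminus T$; the regularizer and exponent are unchanged. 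Thus Theorem~\ref{theorem:PNE_existence_regression} yields existence and Theorem~\ref{theorem:unique_outcome} yields uniqueness of the PNE outcome in the modified game, which is exactly what condition~2 requires.

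With both hypotheses verified, Theorem~\ref{theorem:NE_is_strategyproof} immediately delivers that $h$ is strategyproof, completing the corollary. I anticipate no substantive obstacle; the only subtlety to articulate carefully is that Theorems~\ref{theorem:PNE_existence_regression} and~\ref{theorem:unique_outcome} were stated for an arbitrary partition of agents into honest and strategic, and therefore apply uniformly to every subset $T$, which is precisely what Theorem~\ref{theorem:NE_is_strategyproof} demands. The proof is thus essentially a one-paragraph verification of hypotheses followed by an invocation of Theorem~\ref{theorem:NE_is_strategyproof}.
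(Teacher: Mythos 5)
Your proposal is correct and matches the paper's own proof: both verify condition~1 via Lemma~\ref{lemma:monotone_regression} and condition~2 via Theorems~\ref{theorem:PNE_existence_regression} and~\ref{theorem:unique_outcome} (noting these hold for any partition into honest and strategic agents), then invoke Theorem~\ref{theorem:NE_is_strategyproof}. Your explicit remark that freezing the reports of $T$ yields another $(p,R)$-regression instance with honest set $\honest \cup T$ is a slightly more careful articulation of the same point the paper makes in one clause.
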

\begin{proof}
	We already established that the $(p,R)$-regression algorithm satisfies the conditions of Theorem~\ref{theorem:NE_is_strategyproof}. Specifically, $f_i$ is continuous and strictly increasing in the report of agent $i$ (Lemma~\ref{lemma:monotone_regression}). The second condition follows from Theorems~\ref{theorem:PNE_existence_regression} and~\ref{theorem:unique_outcome}, which hold irrespective of which agents are strategic and which are honest. Hence, the result follows immediately from Theorem~\ref{theorem:NE_is_strategyproof}.
\end{proof}

 \subsection{Pure Price of Anarchy}\label{section:regression_PPOA}
So far, our results in linear regression draw conclusions that are similar to those in the 1D facility location setting. We proved that in both cases, a PNE exists, the PNE outcome is unique, and it coincides with the outcome of a strategyproof algorithm. However, there are fundamental differences between the two settings, which we now highlight. The pure price of anarchy is one such difference. In the 1D case, we illustrated that the PPoA is $\Theta(n)$ when no regularizer is used (Theorem~\ref{theorem:1d-poa}). While high, this is still bounded. In linear regression, we will show that the PPoA is unbounded when no regularizer is used. What if we do use a convex regularizer? In practice, the regularizer is often multiplied by a real number $\lambda$, denoting the weight given to regularization, which is tuned by the algorithm designer. We show that for any convex function $R$, the PPoA remains unbounded if $\lambda R$ is used as the regularizer for a large enough $\lambda$. This does leave open the question whether the PPoA might be bounded for some regularizer with a small weight; we leave this for future work. Informally, the next result shows that strategic behavior can make the overall system unboundedly worse-off. 

\begin{figure}[h]
	\centering
	\includegraphics[scale=0.55]{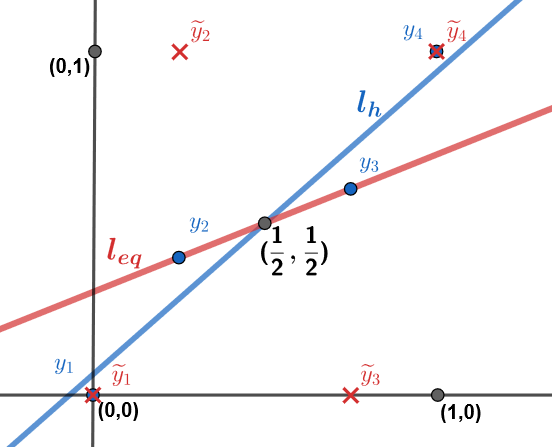}
	\caption{Diagram for Theorem \ref{theorem:regression_sc_unbounded} and Proposition \ref{proposition:infinite_convergence_iterations} with $p=2$ and $R=0$. Blue denotes the honest points and the corresponding line, and red denotes the points at a pure Nash equilibrium and the corresponding equilibrium line.}
	\label{fig:unbounded_diagram}
\end{figure}

\begin{theorem}\label{theorem:regression_sc_unbounded}
	For any $p>1$ and convex regularizer $R$, there exists $\lambda^* > 0$ such that the PPoA of the $(p,\lambda R)$ regression algorithm is unbounded for every $\lambda \ge \lambda^*$. In particular, when there is no regularizer (i.e. $R = 0$), the PPoA of $(p,0)$ regreesion algorithm is unbounded for every $p > 1$.
\end{theorem}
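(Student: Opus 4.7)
My plan is to prove the two assertions in the theorem separately, using rather different techniques.

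\emph{For $R = 0$} (the ``in particular'' clause), I would reduce to the 1D facility location setting analyzed in Section~\ref{section:1d}. Consider regression instances with $d=1$, so each $\vxi$ is just the bias constant $1$; then the $(p,0)$-regression loss $\sum_i |\tyi - \vbeta^T\vxi|^p$ collapses to $\sum_i |\tyi - \beta_0|^p$, which is exactly the 1D $(p,0)$ algorithm of Section~\ref{section:1d}. By Theorem~\ref{theorem:1d-poa}, its PPoA on this subfamily is $\Theta(n)$, which is unbounded as $n \to \infty$. Since the full PPoA of $(p,0)$-regression is a supremum over a strictly larger family of instances, it is at least this, and hence unbounded for every $p > 1$.

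\emph{For general convex $R$ with $\lambda$ large,} the key idea is that the regularizer pins the optimum $\vbetast$ close to $\vbeta^R := \arg\min_{\vbeta} R(\vbeta)$ uniformly over reports, so strategic manipulation becomes essentially ineffective. Since $\vty \in [0,1]^n$ and $\vx$ is fixed, the data-fit term $\sum |\tyi - \vbeta^T \vxi|^p$ is bounded by some $C(\vx)$ uniformly in $\vty$; comparing $\mathcal{L}(\vty,\vx,\vbetast) \le \mathcal{L}(\vty,\vx,\vbeta^R)$ gives $R(\vbetast) \le R(\vbeta^R) + C(\vx)/\lambda$, and convexity of $R$ forces $\vbetast$ into a shrinking neighborhood of $\vbeta^R$ as $\lambda$ grows. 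For the specific $\vx$ I will choose in the construction, $\lambda^*$ can be taken large enough that this neighborhood is arbitrarily tight for every $\lambda \ge \lambda^*$.

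Given this rigidity, pick any $\vbeta \ne \vbeta^R$ and place data so that $y_i = \vbeta^T \vxi$ for each $i$ (with the $\vxi$'s arranged so that $\vbeta - \vbeta^R$ has nontrivial projection on the column span). OLS then fits exactly, so $MSE_h = 0$, while every PNE outcome under $(p,\lambda R)$-regression is close to $(\vbeta^R)^T \vxi$ at each agent, yielding $MSE_{eq} \approx \sum_i ((\vbeta - \vbeta^R)^T \vxi)^2 > 0$. A family of instances with arbitrarily small perturbations of the $y_i$'s off the $\vbeta$-line makes $MSE_h$ positive but vanishing, while $MSE_{eq}$ stays bounded below, so the ratio diverges and PPoA is unbounded.

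The \emph{main technical obstacle} is making the uniform rigidity bound precise, so that for $\lambda \ge \lambda^*$ all PNE outcomes (not merely the regression output on honest reports) stay near the $\vbeta^R$-hyperplane. Uniqueness of the PNE outcome (Theorem~\ref{theorem:unique_outcome}) helps, and the PNE characterization (Lemma~\ref{lemma:PNE_conditions}) further constrains each strategic report to $\{0,1\}$ except when the agent is perfectly happy---a condition that becomes hard to satisfy once the algorithm's output is pinned far from each $y_i$. Once rigidity is established, the remainder is a routine geometric computation.
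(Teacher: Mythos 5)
Your argument for the non-constant-$R$ case is essentially the paper's: the paper picks $\vbeta_1,\vbeta_2$ with $R(\vbeta_1)<R(\vbeta_2)$, observes that the data-fit term can swing by at most $n$ over reports in $[0,1]^n$, sets $\lambda^*>n$, places the honest points on $\vbeta_2$, and concludes the algorithm's output can never be $\vbeta_2$ while OLS fits exactly, so the ratio is (positive)/(zero). Your version reaches the same place but via an unnecessary detour: you anchor to $\vbeta^R=\argmin_{\vbeta}R(\vbeta)$, which need not exist for a general convex $R$ and whose near-minimal sublevel sets need not shrink to a point (e.g.\ $R$ depending on only some coordinates), and you then perturb the data to avoid a zero denominator, which the paper simply does not bother with. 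The comparison-of-two-points argument is both simpler and avoids these existence issues; also note that ``$\vbetast\neq\vbeta_2$ implies nonzero MSE'' needs $\vx$ to have full column rank with $n>d$, a condition both you and the paper should state.

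The $R=0$ case is where you genuinely diverge, and where there is a real gap relative to what the theorem is meant to say. Your reduction to $d=1$ facility location plus Theorem~\ref{theorem:1d-poa} yields $\PPoA=\Theta(n)$, which is unbounded only as $n\to\infty$. The paper's proof constructs a four-agent instance parametrized by $\epsilon$ and shows $\PPoA\geq 1+1/\epsilon^2\to\infty$ with $n=4$ \emph{fixed}: the explanatory variables $\vx$, not the number of agents, drive the blow-up. This distinction is the substantive content of the theorem --- the paper explicitly contrasts its ``unbounded'' bound with the $\Theta(n)$ MSE ratio known for LAD, and a bound that is itself $\Theta(n)$ on each fixed instance size would erase that contrast. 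Under a reading where $\PPoA$ is a supremum over all instance sizes your argument is not wrong, but it proves a strictly weaker statement than the paper establishes and misses the construction that makes the result interesting; you would need something like the paper's $\epsilon$-family (two nearly coincident $x$-values forcing a large lever arm on the fitted line) to close this.
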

\begin{proof}
	We consider cases depending on whether the regularizer $R$ is constant or not. Starting with the latter, when $R$ is not a constant function, there exist $\vbeta_1$ and $\vbeta_2$ such that $R(\vbeta_1) < R(\vbeta_2)$. Recall that the $(p,\lambda R)$-regression objective is to minimize $\sum_{i=1}^{n}{|\ty_i - \vbeta^Tx_i|^p} + \lambda R(\vbeta)$ given the agent reports $\vty$. Choose $\lambda^* > n$. Note that 
	\begin{align*}
	&\sup_{\vty^1,\vty^2} \left|\sum_{i=1}^{n}{|\ty^1_i - \vbeta^Tx_i|^p}-\sum_{i=1}^{n}{|\ty^2_i - \vbeta^Tx_i|^p}\right| 
	\le \sup_{\vty^1,\vty^2} \left|\sum_{i=1}^{n}{|\ty^1_i - \ty^2_i}|\right| \le n < \lambda^*.
	\end{align*}
	
	We show that the PPoA of $(p,\lambda R)$-regression is unbounded for all $\lambda \ge \lambda^*$. Consider an instance with $n > d$ agents whose honest points all lie on the hyperplane $\vbeta_2$. Let $\vhy$ denote agent reports under some PNE. By our choice of $\lambda^*$, it follows that $\sum_{i=1}^{n}{|\hat{y}_i - \vbeta_1^Tx_i|^p} + \lambda R(\vbeta_1) < \sum_{i=1}^{n}{|\hat{y}_i - \vbeta_2^Tx_i|^p} + \lambda R(\vbeta_2)$ regardless of the value of $\vhy$. Hence, the uniquely optimal hyperplane returned by $(p,\lambda R)$-regression is not $\vbeta_2$, and therefore has non-zero MSE. In contrast, the OLS trivially returns $\vbeta_2$ and has zero MSE, resulting in unbounded PPoA for the $(p,\lambda R)$-regression. 
	
	We now consider the case where $R$ is a constant function. Hence, it does not affect the minimization objective of $(p,\lambda R)$-regression. Thus, without loss of generality, let $R = 0$. We will be using $\byi^p$ to denote the projection of the $(p,0)$-regression equilibrium plane at some $x_i$ and $\vby^p$ for the vector of all projections. We use $\bar{y}^{OLS}_i$ to denote the projection at $x_i$ of the $(2,0)$-regression line using the honest points and $\vby^{OLS}$ for the vector of all such projections. Thus, $\PPoA \ge MSE_{eq}/MSE_h$, where $MSE_{eq} = \sum_i{(y_i - \bar{y}^p_i)^2}$ and $MSE_h = \sum_i{(y_i - \bar{y}^{OLS}_i)^2}$. 
	
	Consider the following example. There are four agents with reported values $(0,0)$, $(\frac{1-\epsilon}{2}, 1)$, $(\frac{1+\epsilon}{2}, 0)$, $(1,1)$. That is, $\vty = (0,$ $\frac{1-\epsilon}{2},$ $\frac{1+\epsilon}{2},$ $1)$. Let the $(p,0)$-regression line for these points pass through $(0,\bar{y}^p_1)$, $(\frac{1-\epsilon}{2}, \bar{y}^p_2), (\frac{1+\epsilon}{2}, \bar{y}^p_3), (1, \bar{y}^p_4)$. By the symmetry of the problem this line must also pass through $(\frac{1}{2}, \frac{1}{2})$. For $p=1$, we have that $\vby^1 = [0, \frac{1-\epsilon}{2}, \frac{1+\epsilon}{2}, 1]$. Note that the residuals for points 2 and 3 are higher than for points 1 and 4, and observe that for $p>1$, the $(p,0)$-linear regression algorithm progressively tries to minimize the larger residuals. One can check that for $p>1$, $\bar{y}^p_2 = \bar{y}^1_2 + a = \frac{1-\epsilon}{2} + a$ and $\bar{y}^p_3 = \bar{y}^1_3 - a = \frac{1+\epsilon}{2} - a$ for some $a>0$. Since all $\ell_p$-regression lines pass through $(\frac{1}{2}, \frac{1}{2})$, by similar triangles we have that for $p>1$, $\bar{y}^p_1 = \bar{y}^1_1 + \frac{a}{\epsilon} = \frac{a}{\epsilon}$. Now if the preferred/true values of the 4 agents are $\vy = (0,\bar{y}^{p}_2,\bar{y}^{p}_3,1)$, the reported values above are a pure Nash Equilibrium, and the projection values are unique (by Theorem \ref{theorem:unique_outcome}). Note this is regardless of whether agents 1 and 4 are strategic or honest. As such, we have $MSE_{eq} = 2\left(\frac{a}{\epsilon}\right)^2$. 
	
	For $MSE_h$, note that the hat matrix for $(2,0)$-regression depends only on $\vx$, and has the form $\vH = \vx(\vx^T\vx)^{-1}\vx^T$ and $\vby^{OLS} = \bm{H}\vy$. The symmetry of the honest points for any $p$ means that the $(2,0)$-regression line always passes through $(\frac{1}{2}, \frac{1}{2})$ as well. For $p=1$, the honest points are co-linear, meaning the $(2,0)$-regression line of these points have 0 residual for all points (in fact, it's the same as the equilibrium line). For $p>1$, as we mentioned above, honest points 2 and 3 adjust by some $a$ and we have $\vy = (0,\bar{y}^{1}_2 + a, \bar{y}^{1}_3 - a, 1)$. We now consider the affect of these two changed honest points on the residual at $x_1$ and $x_2$. That is, we consider $r^h_1 = |\bar{y}^{OLS}_1 - y_1|$ and $r^h_2 = |\bar{y}^{OLS}_2 - y_2|$ respectively - noting a symmetric case exists for $r^h_3$ and $r^h_4$. First, we have the following values for the matrix $H$:
	
	\begin{equation}
	\begin{split}
	\bm{H}_{12} = \bm{H}_{21} = \frac{(1+\epsilon)^2}{4(1+\epsilon^2)} \quad \bm{H}_{13} = \bm{H}_{31} = \frac{(\epsilon-1)^2}{4(1+\epsilon^2)} \\
	\bm{H}_{22} = \frac{3\epsilon^2 + 1}{4(1+\epsilon^2)} \quad \bm{H}_{23} = \bm{H}_{32} = \frac{1-\epsilon^2}{4(1+\epsilon^2)}
	\end{split}
	\end{equation}
	
	Note that $r^h_1 = r^h_2 = 0$ when $p=1$, and only $y_2$ and $y_3$ have changed (by $+a$ and $-a$ respectively) for $p \ne 1$. Recall, $y_1 = 0$ and $y_2 = \bar{y}^p_2 = \bar{y}^1_2 + a$. Denote the $i^{th}$ row of $H$ by $\bm{h_i}$. Then we have:
	
	\begin{equation*}
		r^h_1 = \bm{h_1} \cdot
		\begin{bmatrix}
		0      \\
		\bar{y}^{1}_2 + a      \\
		\bar{y}^{1}_3 - a      \\
		1
		\end{bmatrix} - 0 
		= \bm{h_1} \cdot
		\begin{bmatrix}
		0      \\
		\bar{y}^{1}_2      \\
		\bar{y}^{1}_3      \\
		1
		\end{bmatrix} 
		+ \bm{h_1} \cdot
		\begin{bmatrix}
		0      \\
		a      \\
		-a      \\
		0
		\end{bmatrix}
		= \bm{h_1} \cdot
		\begin{bmatrix}
		0      \\
		a      \\
		-a      \\
		0
		\end{bmatrix}
	\end{equation*}
	
	$$\therefore r^h_1 = a\frac{(1+\epsilon)^2}{4(1+\epsilon^2)} - a\frac{(\epsilon-1)^2}{4(1+\epsilon^2)} = \frac{a\epsilon}{(1+\epsilon^2)}$$
	Similarly, for $r^h_2$, we have that: 
	\begin{equation*}
		\begin{split}
		r^h_2 = \left(\frac{1-\epsilon}{2} + a\right) - \bm{h_2} \cdot
		\begin{bmatrix}
		0      \\
		\bar{y}^{1}_2 + a      \\
		\bar{y}^{1}_3 - a      \\
		1
		\end{bmatrix}
		= \left(\frac{1-\epsilon}{2} + a\right) - \left(
		\frac{1-\epsilon}{2}
		+ \bm{h_2} \cdot
		\begin{bmatrix}
		0      \\
		a      \\
		-a      \\
		0
		\end{bmatrix} \right)
		\end{split}
	\end{equation*}
	
	$$\therefore r_2 = a - a\frac{3\epsilon^2 + 1}{4(1+\epsilon^2)} + a\frac{1-\epsilon^2}{4(1+\epsilon^2)} = \frac{a}{1+\epsilon^2}$$
	
	By symmetry, $r_1 = r_4$ and $r_2 = r_3$. 
	Thus, we have that the PPoA of $(p,0)$-regression satisfies: 
	$$ \PPoA \ge \frac{2\left(\frac{a}{\epsilon}\right)^2}{2\left[\left(\frac{a\epsilon}{(1+\epsilon^2)}\right)^2 + \left(\frac{a}{1+\epsilon^2}\right)^2\right]} = \frac{\frac{1}{\epsilon^2}}{\frac{1}{1 + \epsilon^2}} = 1 + \frac{1}{\epsilon^2}$$
	
	As $\epsilon \rightarrow 0$, the PPoA becomes unbounded.
\end{proof}

\section{Implementation and Experiments}\label{section:implementation_and_experiments}
While the main goal of this paper is to understand the structure of pure Nash equilibria under linear regression, one might wonder whether, given honest inputs, the unique PNE outcome can be computed efficiently. In this section, we briefly examine this, discover another aspect in which linear regression departs from 1D facility location, and describe some interesting phenomena regarding the PPoA of $(p,R)$-regression mechanisms in practice. We leave detailed computational and empirical analysis of $(p,R)$-regression to future work.

\subsection{Computation of Pure Nash Equilibria}
In facility location, a fully constructive characterization of strategyproof algorithms is known~\cite{moulin1980strategy}. This, along with Theorem~\ref{theorem:1d} and a formula of \citet{yamamura2013generalized}, allows easy computation of the PNE outcome of any $(p,R)$-regression; details are in section \ref{section:1d}. However, characterizing strategyproof algorithms is a challenging open question for the linear regression setting~\cite{chen2018strategyproof}. Thus, while Theorem~\ref{theorem:NE_is_strategyproof} demonstrates that the PNE outcome is also the outcome of a strategyproof algorithm, it does not allow us to derive an analytic expression for the unique PNE outcome. 

In Section~\ref{sec:properties}, we outlined an exponential-time approach that follows immediately from Lemma~\ref{lemma:PNE_conditions}. However, this is impractical unless there are very few agents. Turning elsewhere, a standard approach to computing Nash equilibria is through best-response updates~\cite{ben2017best,ben2019regression,yamamura2013generalized}. Specifically, we start from an (arbitrary) profile of reports by the agents, and in each step, allow an agent not already playing her best response, to switch to her best response. If this process terminates, it must do so at a PNE, regardless of initial conditions. For 1D facility location, it is easy to show that this terminates at a PNE in finitely many steps (see below). For linear regression, however, we show in Proposition \ref{proposition:infinite_convergence_iterations} that the process need not terminate in finitely many steps even for the most simple OLS algorithm.

\paragraph{Best response dynamics converges in finite iterations for 1d facility location} We give an informal argument that under the average rule in 1D, starting from any reports, there is always a best response path that terminates at a PNE in finitely many iterations. For $n$ agents (of which $m$ are strategic), to move the mean by an amount $\Delta$, an agent has to move their report by an amount $n\Delta$. Now fix an initial set of reports. Consider only the 2 strategic agents with the lowest and the highest preferred values, say these are $y_1$ and $y_m$, respectively. Consider best response updates by only one of these two agents. If initially $\by \notin [y_1, y_m]$, both agents increase their reports until $\by \in [y_1,y_m]$. The only case where this does not happen is if both agents become saturated by reporting $1$. If they do bring $\by \in [y_1,y_m]$, then after each move of agent $1$: (a) she is perfectly happy, causing the agent $m$ to move up by $n(y_m-y_1)$ or become saturated at $\tilde{y}_m = 1$, or (b) she goes to $0$ and becomes saturated. Hence, in each iteration, either one agent moves (in a constant direction) by at least $n(y_m-y_1)$, or one agent becomes saturated. Hence, in finitely many steps, either agent $1$ is saturated at $0$ with $\by \ge y_1$ or agent $m$ is saturated at $1$ with $\by \le y_m$. It is easy to see that this agent will never move again. We can now ignore the saturated agent, and repeat the process with the remaining $m-1$ strategic agents. Using this approach inductively, it follows that an equilibrium will be reached in finitely many iterations.

\begin{proposition}\label{proposition:infinite_convergence_iterations}
	For the OLS (i.e. $(2,0)$-regression algorithm), there exists a family of instances in which no best-response path starting from honest reporting terminates in finite steps. 
\end{proposition}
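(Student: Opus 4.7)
I would take as the family the four-agent instances from Theorem~\ref{theorem:regression_sc_unbounded} (one for each $\epsilon\in(0,1)$), with $x=(0,\tfrac{1-\epsilon}{2},\tfrac{1+\epsilon}{2},1)$, true responses $\vy=(0,\bar{y}^2_2,\bar{y}^2_3,1)$, and only agents $2$ and $3$ strategic (agents $1$ and $4$ are honest, so their reports are fixed at $0$ and $1$ respectively). Recall from that proof that the unique PNE has $(\tilde{y}_2,\tilde{y}_3)=(1,0)$. My plan is to show that every best-response trajectory starting from the honest reports $(\tilde{y}_2,\tilde{y}_3)=(y_2,y_3)$ keeps $(\tilde{y}_2,\tilde{y}_3)$ strictly in the open box $(0,1)^2$, so the PNE is never attained in finitely many steps.

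With $\tilde{y}_1=0,\tilde{y}_4=1$ fixed, each strategic agent has a best response that is affine in the other's report, obtained directly from the OLS hat matrix given in the proof of Theorem~\ref{theorem:regression_sc_unbounded}. Writing $\alpha=H_{23}/H_{22}=(1-\epsilon^2)/(1+3\epsilon^2)\in(0,1)$, agent $2$'s unconstrained best response takes the form $\tilde{y}_2\mapsto c_2-\alpha\tilde{y}_3$ and agent $3$'s is $\tilde{y}_3\mapsto c_3-\alpha\tilde{y}_2$ for constants $c_2,c_3$ depending on $y_2,y_3$ and certain hat-matrix entries. The crucial step is to substitute the explicit formula $a=\epsilon^2(1+\epsilon)/(2(1+\epsilon^2))$ derived in Theorem~\ref{theorem:regression_sc_unbounded}; after simplification this yields the clean identities $c_2=1$ and $c_3=\alpha$ exactly. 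Consequently, the unique unconstrained fixed point of the two-player best-response map coincides with the PNE reports $(1,0)$.

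With the change of variables $u_t=1-\tilde{y}_2^t$ and $v_t=\tilde{y}_3^t$, the best-response updates become $u\mapsto\alpha v$ (agent $2$) and $v\mapsto\alpha u$ (agent $3$), both strict contractions since $\alpha\in(0,1)$. Starting from honest reports gives $u_0=v_0=y_3>0$ (using $y_2+y_3=1$), and a short induction shows $u_t,v_t\in(0,y_3]$ for every $t\ge 0$ regardless of the update order. Hence $\tilde{y}_2^t<1$ and $\tilde{y}_3^t>0$ throughout the trajectory, so by the uniqueness of the PNE outcome (Theorem~\ref{theorem:unique_outcome}) no finite prefix of the best-response path is a PNE; moreover at every such intermediate state the contractive update rule implies at least one agent strictly prefers to deviate, so the dynamics cannot terminate.

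The main technical obstacle is the exact identity $c_2=1,\,c_3=\alpha$. Without this, the unconstrained fixed point would lie either strictly inside $[0,1]^2$ (giving an interior PNE, contradicting Theorem~\ref{theorem:unique_outcome}) or strictly outside it (so that agent $2$ or $3$ would saturate at a corner in finitely many steps, causing the path to terminate). The delicate cancellation that places the fixed point exactly on the corner $(1,0)$ of the feasible box is precisely what produces the ``convergence without termination'' phenomenon claimed in the proposition.
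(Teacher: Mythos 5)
Your proposal is correct and uses the exact same family of instances as the paper (the symmetric four-point configuration with the PNE reports at $(\tilde{y}_2,\tilde{y}_3)=(1,0)$), but the mechanics of the non-termination argument differ. The paper argues qualitatively: it maintains the invariants $\tilde{y}_2\ge y_2$, $\tilde{y}_3\le y_3$, $\bar{y}_2\le y_2$, $\bar{y}_3\ge y_3$ using the signs of the hat-matrix entries ($H_{23}>0$) and Lemma~\ref{lemma:monotone_regression}, and rules out saturation at the corner via a contradiction with the known outcome at $\vty=(0,1,0,1)$. You instead write the best responses as explicit affine maps and, after the change of variables $u=1-\tilde{y}_2$, $v=\tilde{y}_3$, obtain the contraction $u\mapsto\alpha v$, $v\mapsto\alpha u$ with $\alpha=H_{23}/H_{22}\in(0,1)$, so positivity of $u,v$ is preserved forever; this is a cleaner and more quantitative route that also immediately shows no intermediate profile is a PNE (the joint unconstrained fixed point is the corner itself). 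One small simplification: you do not need the explicit formula for $a$ to get $c_2=1$ and $c_3=\alpha$ --- these identities are immediate from the way the instance is defined, since $y_2$ and $y_3$ are by construction the projections under the reports $(0,1,0,1)$, so the unconstrained best response to $\tilde{y}_3=0$ is exactly $\tilde{y}_2=1$ and vice versa. The ``delicate cancellation'' you describe is therefore built into the construction rather than something that must be verified by computation.
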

\begin{proof}
	Consider the $4$ agent setting (also used in Theorem~\ref{theorem:regression_sc_unbounded}) illustrated in Figure~\ref{fig:unbounded_diagram}. That is, let the preferred/true values be: $(0,0)$, $(\frac{1-\epsilon}{2}, y_2)$, $(\frac{1+\epsilon}{2}, y_3)$, $(1,1)$, where $y_2$ and $y_3$ are such that when $\vty = [0, 1, 0, 1]$, the corresponding projections are: $\bar{y}_2 = y_2$ and $\bar{y}_3 = y_3$. Thus, $\vty = [0,1,0,1]$ is an equilibrium strategy. Let agents 2 and 3 be strategic.\footnote{Whether agents $1$ and $4$ are strategic or honest does not matter in this example.} Since $p=2$, we have a linear mapping characterized by $\vH\vty = \vby$. $H_{ij}$ reflects the effect $\tyi$ has on $\bar{y}_j$, and $\vH$ is symmetric. By strong monotonicity (Lemma~\ref{lemma:monotone_regression}), $H_{ii}$ is always positive. It is easy to compute that $H_{23} = H_{32}= \frac{1 - \epsilon^2}{4(1 + \epsilon^2)} > 0$. Let the agents initially start by reporting honestly, and as such $\bar{y}_2 < y_2 = \tilde{y}_2$ and $\bar{y}_3 > y_3 = \tilde{y}_3$.
	
	Since there are only 2 strategic agents, they take turns playing best response alternatively. Consider a round in which agent $2$ plays best response, and at the start of the round, the following hold: (1) $\tilde{y}_2 \geq y_2$, $\tilde{y}_3 \leq y_3$, and (2) $\bar{y}_2 \leq y_2$ and $\bar{y}_3 \geq y_3$. Since agent 2 is playing best response, she is not perfectly happy. Hence, $\bar{y}_2 < y_2$. Thus, by Lemma~\ref{lemma:monotone_regression}, agent 2 must increase $\tilde{y}_2$ by some $a > 0$. Since $H_{23} > 0$, this maintains $\bar{y}_3 > y_3$. Similarly, when agent 3 plays a best response, it maintains $\bar{y}_2 < y_2$. Since the initial conditions (honest reporting) satisfy (1) and (2), they will always be satisfied. That is, player $2$ will always report less than $1$ and have $\bar{y}_2 < y_2$, and player $3$ will always report greater than $0$ and have $\bar{y}_3 > y_3$. Thus, the PNE will never be reached in finitely many steps. 
	
	To see this formally, consider a stage satisfying (1) and (2) wherein the best response of agent 2 is $\tilde{y}_2 = 1$ and $\tilde{y}_3 \ne 0$. Since this is a best response, $\bar{y}_2 \leq y_2$ (< in case she isn't perfectly happy) and thus $\bar{y_3} > y_3$. If agent 3 now under-reports and plays $\tilde{y}_3 = 0$, then since $H_{32} > 0$, $\bar{y}_2 < y_2$. However, we now have $\vty = (0,\, 1,\, 0,\, 1)$ where we know the outcome is: $\bar{y}_2 = y_2$ and $\bar{y}_3 = y_3$. Since the regression outcome is unique, this is a contradiction. A similar situation hold for agent 3. Thus if $\tilde{y_3} \ne 0$, best response of agent 2, $\ne 1$ and if $\tilde{y}_2 \ne 1$, the best response of agent 3, $\ne 0$. Since these conditions hold initially, they hold in all rounds.
	
	Thus starting from honest values, agent 2 always over-reports and 3 under-reports and the outcome is never the unique equilibrium outcome. Moreover, at no round does agent 2 or 3 ever reach their equilibrium strategy. Thus at this initial value, no possible best response sequence will terminate in finite iterations.
\end{proof}

However, we emphasize that the example in the proof of Proposition~\ref{proposition:infinite_convergence_iterations} is a worst-case example. In practice, best-response update works quite well for finding the unique PNE outcome quickly; we use this approach successfully in the experiments described next. 

 \subsection{Experiments}
We conduct experiments on both synthetic data and real data to measure two aspects of strategic manipulation: the number of best-response updates needed to reach a pure Nash Equilibrium (red line) and the average PPoA\footnote{We abuse the terminology slightly for simplicity. The average PPoA refers to the average ratio of the loss under the PNE outcome of a mechanism to the loss under the OLS with honest reporting in our experiments.} (solid blue line), which we compare against the average PPoA of the strategyproof LAD (i.e. $(1,0)$-regression) algorithm. We focus on four key parameters: the number of agents $n$, the dimension of explanatory variables $d$, the norm value $p$, and the fraction of agents who are strategic, denoted $\alpha = m/n \in [0,1]$. The regularizer $R$ is always set to 0. We also vary the norm $q$ (default is $q=2$) with regards to which the loss is measured in the PPoA definition. To find the unique PNE outcome, we used best-response updates to obtain outcome they converged to, and verified that it was a PNE (and it always was). 
\begin{figure*}[htb]
	\centering
	\captionsetup[subfigure]{justification=centering}
	\begin{subfigure}{.32\textwidth}
		\centering
		\includegraphics[scale=0.25]{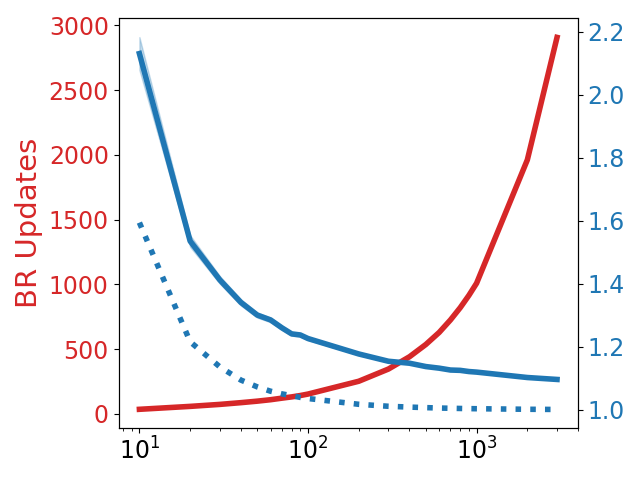}
		\caption{$n \in [10,3000]$}
		\label{figure:1a}
	\end{subfigure}
	\begin{subfigure}{.32\textwidth}
		\centering
		\includegraphics[scale=0.25]{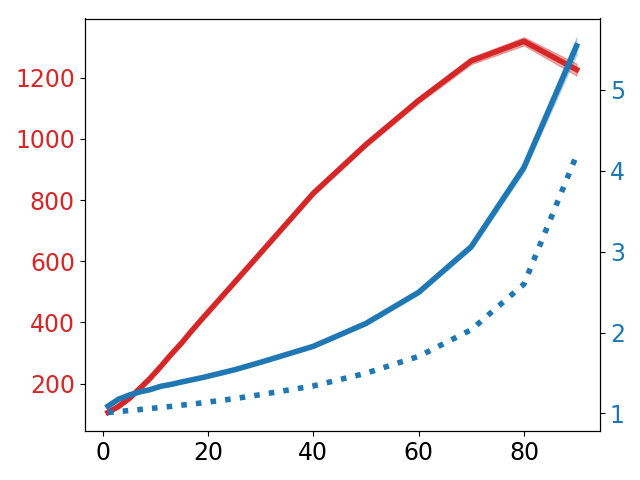}
		\caption{$d \in [1,90]$}
		\label{figure:1b}
	\end{subfigure}
	\begin{subfigure}{.32\textwidth}
		\centering
		\includegraphics[scale=0.25]{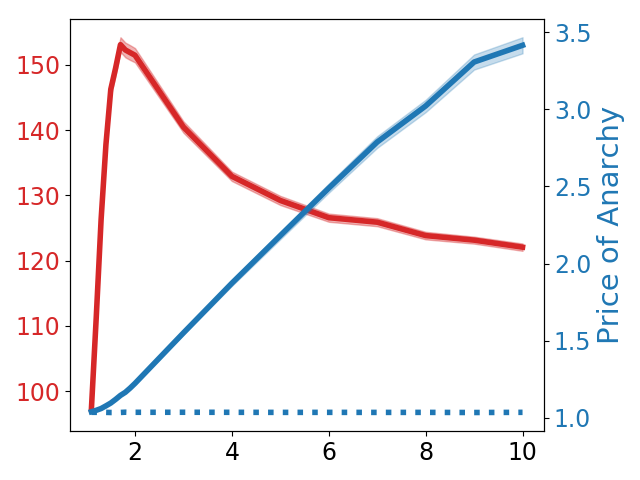}
		\caption{$p \in [1.1,10]$}
		\label{figure:1c}
	\end{subfigure}
	\caption{The effect of varying $n$, $d$, and $p$ on synthetic data with 95\% confidence intervals}
\end{figure*}
\begin{figure*}[htb]
	\centering
	\captionsetup[subfigure]{justification=centering}
	\begin{subfigure}{.32\textwidth}
		\centering
		\includegraphics[scale=0.25]{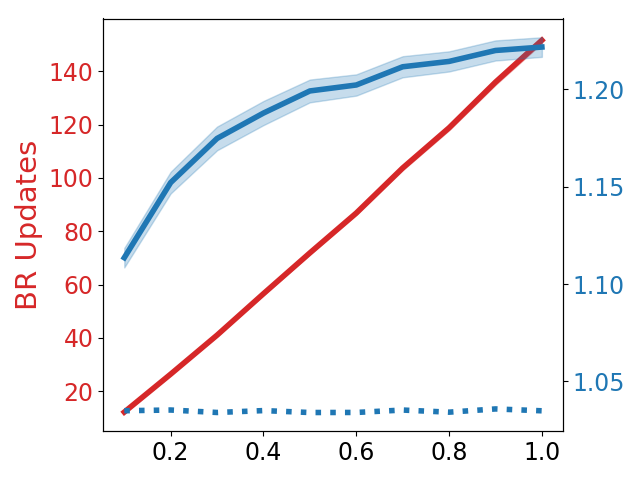}
		\caption{$\alpha\in[0.1,1]$, synthetic data}
		\label{figure:2a}
	\end{subfigure}
	\begin{subfigure}{.32\textwidth}
		\centering
		\includegraphics[scale=0.25]{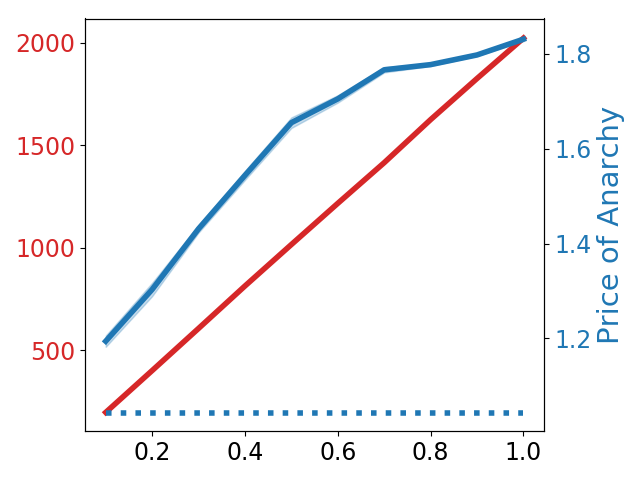}
		\caption{$\alpha\in[0.1,1]$, Kaggle dataset}
		\label{figure:2b}
	\end{subfigure}
	\begin{subfigure}{.32\textwidth}
		\centering
		\includegraphics[scale=0.25]{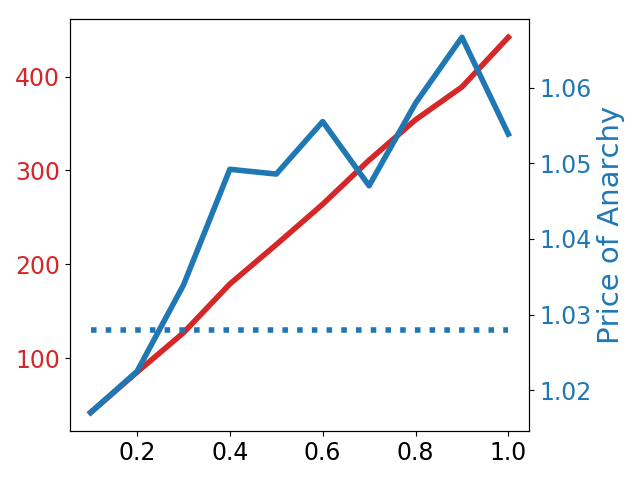}
		\caption{$\alpha\in[0.1,1]$, UCI dataset}
		\label{figure:2c}
	\end{subfigure}
	\caption{The effect of varying $\alpha$ on synthetic and real data. Plots with synthetic data have 95\% confidence intervals.}
\end{figure*}

\paragraph{Synthetic experiments:} In each experiment, we vary one parameter, while using default values for the others. The default values are $n=100$, $d=6$, $p=2$, and $\alpha = 1$.\footnote{We choose $\alpha=1$, which corresponds to all agents being strategic, as the default value in our experiments because this is the standard setting studied in the game-theoretic literature on regression. Note that our theoretical results allow some of the agents to be honest.} We plot the average results over $1,000$ random instances along with $95\%$ confidence bounds (although they are too narrow to be visible in most plots). The data generation process is as follows. First, we sample $\vbetast \in [-1,1]^{d+1}$ uniformly at random. Next, we sample each entry in $\vx \in \R^{d \times n}$ iid from the standard normal distribution and set each $y_i = (\vbetast)^T x_i + \epsilon_i$, where $\epsilon_i$ is Gaussian noise with zero mean and s.d. $0.5$. Finally, we normalize $\vy$ to lie in $[0,1]^n$. 

\paragraph{Real experiments:} We also conduct experiments with two real-world housing datasets: the California Housing Prices dataset from Kaggle with $n \approx 2000$ and $d=9$ (Figure \ref{figure:2b}) and the real estate valuation dataset from UCI with $n \approx 400$ and $d=7$ (Figure \ref{figure:2c})~\cite{kaggle_california,yeh2018building}. In these experiments, we also normalize $\vy$ to lie in $[0,1]^n$.

Figures \ref{figure:1a}, \ref{figure:1b}, \ref{figure:1c} and \ref{figure:2a} show the effect of varying $n$, $d$, $p$, and $\alpha$, respectively, in our synthetic experiments. With a higher number of agents $n$, the best-response process takes longer, but the PPoA decreases quickly. The dependence on $d$ is more interesting. For $d < n$, the number of best-response steps and the PPoA increase with $d$ (with a slight decrease in the former and a quicker increase in the latter as $d$ approaches $n=100$). Of course, when $d=n$, the only PNE is where all agents are perfectly happy, which means the number of best-response steps drop to zero and PPoA drop to $1$. Hence, for $d < n$, there is a curse of dimensionality, even though $d=n$ is an ideal scenario. 

The effect of $p$ is also surprising. With $p \in (1,2]$, intuitively, one would expect a tradeoff. Mechanisms with $p$ closer to $1$ may be less vulnerable to manipulation than the OLS ($p=2$); indeed, $p=1$ is known to be strategyproof. But given the equilibrium reports, OLS at least minimizes the MSE, which is the objective underlying our PPoA definition, whereas mechanisms with $p < 2$ optimize a different objective. Given this, we find it surprising that, not only does $p < 2$ result in a lower PPoA than $p=2$, but PPoA seems to increase monotonically with $p$ (Figure~\ref{fig:ppoa_plot} below shows that this is also true when PPoA is measured using the $q$-norm for other values of $q$). We also note that the strategyproof $(1,0)$-regression algorithm performs no worse than the PNE of the $(p,0)$-regression algorithm for any $p > 1$ in terms of MSE. Another observation of note is that the number of best-response updates increases until $p \approx 2$ and then decreases. In our synthetic and real experiments, both the number of best-response updates and the PPoA generally increase with $\alpha$, which is expected. However, it is worth noting that in Fig. 2b, even as few as $10\%$ of the agents strategizing leads to a $27\%$ increase in the overall MSE, and with all agents strategizing, the MSE doubles. In Fig 2c, the effect of strategizing is more restrained. Surprisingly, in this case, the OLS equilibrium outperforms the $(1,0)$-regression algorithm for small $\alpha$.

\paragraph{Experiment - PPoA with different $q$}
So far we consider PPoA measured with respect to mean squared error ($q=2$), which is the squared $\ell_2$ norm of residuals. We now experimentally evaluate PPoA measured with respect to other values of $q$, as defined below:
$$
\PPoA_q(f) = \max_{\vy \in [0,1]^n} \frac{\max_{\vby \in \NE_f(\vy)} {\textstyle\sum_{i=1}^n} |y_i-\byi|^q}{{\textstyle\sum_{i=1}^n} |y_i-\byi^{q\text{-opt}}|^q},
$$
where $\byi^{q\text{-opt}}$ is the outcome of the mechanism minimizing $\ell_q$ norm of residuals with honest reports. 

Figure~\ref{fig:ppoa_plot} shows $\PPoA_q$ for different $\ell_p$ regression algorithms. Once again, we notice the same pattern for each value of $q$ as we did in Figure 1c for $q=2$: the PPoA increases monotonically with $p$.

\begin{figure}[htb]
	\centering
	\includegraphics[scale=0.45]{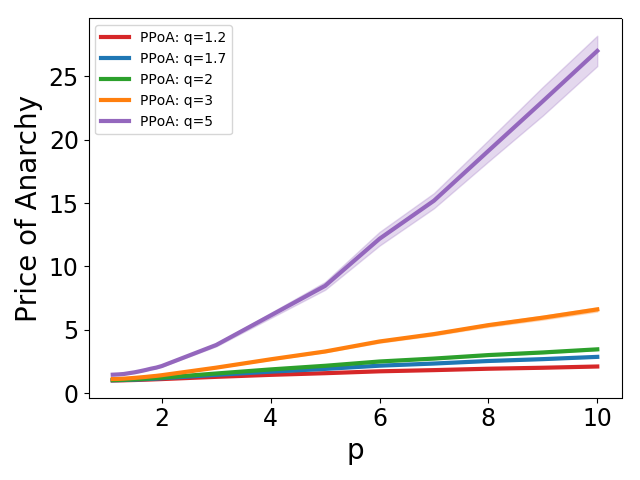}
	\caption{Varying $p$ between $1.1$ and $10$ and graphing the PPoA using different values of $q$. The same defaults are used as in other synthetic experiments ($n=100$, $d=6$, $\alpha=1$) and the average of $1000$ random instances are plotted with $95\%$ confidence intervals (though too narrow to be visible on some curves).}
	\label{fig:ppoa_plot}
\end{figure}

\section{Discussion and Future Work}
This work focused on the role of \textit{strategic noise} in linear regression, where data sources manipulate their inputs to minimize their own loss. We established that a popular class of linear regression algorithms --- minimizing the $\ell_p$ loss with a convex regularizer --- has a unique pure Nash equilibrium outcome. Our theoretical results show that in the worst case, strategic behavior can cause a significant loss of efficiency, but experiments highlight a less pessimistic average case, which future work can focus on rigorously analyzing.

It is also interesting to ponder the implications of our general result connecting strategyproof algorithms to the unique PNE of non-strategyproof algorithms beyond linear regression. Similar results are known in other domains~\cite{roberts1979characterization,dasgupta1979implementation,laffont1982nash}, including unique equilibria of first-price auctions~\cite{chawla2013auctions}. This indicates the possibility of a more general result along these lines.

Lastly, the study of strategic noise in machine learning environments is still in its infancy. We view our work as not only advancing the state-of-the-art, but also as a stepping stone to more realistic analysis. For example, future work can move past assuming that agents have complete information about others' strategies --- a common assumption in the literature~\citep{dekel2010incentive,ben2017best,ben2019regression} --- and consider Bayes-Nash equilibria. Considering other equilibrium concepts relevant to machine learning settings may also prove fruitful. Other extensions include studying non-strategyproof algorithms in environments such as classification or generative modeling, and investigating generalization of equilibria (i.e. whether the equilibrium with many agents can be approximated by sampling a few agents).

\bibliographystyle{plainnat}  
\bibliography{abb,bibliography}  

\begin{thebibliography}{37}
\providecommand{\natexlab}[1]{#1}
\providecommand{\url}[1]{\texttt{#1}}
\expandafter\ifx\csname urlstyle\endcsname\relax
  \providecommand{\doi}[1]{doi: #1}\else
  \providecommand{\doi}{doi: \begingroup \urlstyle{rm}\Url}\fi

\bibitem[{Ben-Porat} and Tennenholtz(2019)]{ben2019regression}
O.~{Ben-Porat} and M.~Tennenholtz.
\newblock Regression equilibrium.
\newblock In \emph{Proceedings of the 20th ACM Conference on Economics and
  Computation (EC)}, pages 173--191, 2019.

\bibitem[Ben-Porat and Tennenholtz(2017)]{ben2017best}
Omer Ben-Porat and Moshe Tennenholtz.
\newblock Best response regression.
\newblock In \emph{Proceedings of the Annual Conference on Neural Information
  Processing Systems (NeurIPS)}, pages 1499--1508, 2017.

\bibitem[Bousquet et~al.(2004)Bousquet, {von Luxburg}, and
  Gunnar]{statistical_Learning}
O.~Bousquet, U.~{von Luxburg}, and R.~Gunnar.
\newblock \emph{Introduction to Statistical Learning Theory}.
\newblock Springer, 2004.

\bibitem[Bshouty et~al.(2002)Bshouty, Eiron, and Kushilevitz]{BEK02}
N.~H. Bshouty, N.~Eiron, and E.~Kushilevitz.
\newblock {PAC} learning with nasty noise.
\newblock \emph{Theoretical Computer Science}, 288\penalty0 (2):\penalty0
  255--275, 2002.

\bibitem[Cai et~al.(2015)Cai, Daskalakis, and Papadimitriou]{CDP15}
Y.~Cai, C.~Daskalakis, and C.~H. Papadimitriou.
\newblock Optimum statistical estimation with strategic data sources.
\newblock In \emph{Proceedings of the 28th Conference on Computational Learning
  Theory (COLT)}, pages 280--296, 2015.

\bibitem[Caragiannis et~al.(2016)Caragiannis, Procaccia, and
  Shah]{caragiannis2016truthful}
I.~Caragiannis, A.~D. Procaccia, and N.~Shah.
\newblock Truthful univariate estimators.
\newblock In \emph{Proceedings of the 33rd International Conference on Machine
  Learning (ICML)}, 2016.

\bibitem[Caro et~al.(2010)Caro, Gallien, D{\'\i}az, Garc{\'\i}a, Corredoira,
  Montes, Ramos, and Correa]{caro2010zara}
F.~Caro, J.~Gallien, M.~D{\'\i}az, J.~Garc{\'\i}a, J.~M. Corredoira, M.~Montes,
  J.~A. Ramos, and J.~Correa.
\newblock {Z}ara uses operations research to reengineer its global distribution
  process.
\newblock \emph{Interfaces}, 40\penalty0 (1):\penalty0 71--84, 2010.

\bibitem[Chawla and Hartline(2013)]{chawla2013auctions}
S.~Chawla and J.~D. Hartline.
\newblock Auctions with unique equilibria.
\newblock In \emph{Proceedings of the 14th ACM Conference on Economics and
  Computation (EC)}, pages 181--196, 2013.

\bibitem[Chen et~al.(2013)Chen, Caramanis, and Mannor]{CCM13}
Y.~Chen, C.~Caramanis, and S.~Mannor.
\newblock Robust sparse regression under adversarial corruption.
\newblock In \emph{Proceedings of the 30th International Conference on Machine
  Learning (ICML)}, pages 774--782, 2013.

\bibitem[Chen et~al.(2018)Chen, Podimata, Procaccia, and
  Shah]{chen2018strategyproof}
Y.~Chen, C.~Podimata, A.~D. Procaccia, and N.~Shah.
\newblock Strategyproof linear regression in high dimensions.
\newblock In \emph{Proceedings of the 19th ACM Conference on Economics and
  Computation (EC)}, pages 9--26, 2018.

\bibitem[Cummings et~al.(2015)Cummings, Ioannidis, and Ligett]{CIL15}
R.~Cummings, S.~Ioannidis, and K.~Ligett.
\newblock Truthful linear regression.
\newblock In \emph{Proceedings of the 28th Conference on Computational Learning
  Theory (COLT)}, pages 448–--483, 2015.

\bibitem[Dasgupta et~al.(1979)Dasgupta, Hammond, and
  Maskin]{dasgupta1979implementation}
P.~Dasgupta, P.~Hammond, and E.~Maskin.
\newblock The implementation of social choice rules: Some general results on
  incentive compatibility.
\newblock \emph{The Review of Economic Studies}, 46\penalty0 (2):\penalty0
  185--216, 1979.

\bibitem[Dekel et~al.(2010)Dekel, Fischer, and Procaccia]{dekel2010incentive}
O.~Dekel, F.~Fischer, and A.~D. Procaccia.
\newblock Incentive compatible regression learning.
\newblock \emph{Journal of Computer and System Sciences}, 76\penalty0
  (8):\penalty0 759--777, 2010.

\bibitem[Dong et~al.(2018)Dong, Roth, Schutzman, Waggoner, and Wu]{DRZWW18}
J.~Dong, A.~Roth, Z.~Schutzman, B.~Waggoner, and Z.~S. Wu.
\newblock Strategic classification from revealed preferences.
\newblock In \emph{Proceedings of the 19th ACM Conference on Economics and
  Computation (EC)}, pages 55--70, 2018.

\bibitem[Fr{\'e}nay and Verleysen(2013)]{frenay2013classification}
B.~Fr{\'e}nay and M.~Verleysen.
\newblock Classification in the presence of label noise: a survey.
\newblock \emph{IEEE Transactions on Neural Networks and Learning Systems},
  25\penalty0 (5):\penalty0 845--869, 2013.

\bibitem[Goldman and Sloan(1995)]{GS95}
S.~A. Goldman and R.~H. Sloan.
\newblock Can {PAC} learning algorithms tolerate random attribute noise?
\newblock \emph{Algorithmica}, 14\penalty0 (1):\penalty0 70--84, 1995.

\bibitem[Gu and Rigazio(2014)]{gu2014towards}
S.~Gu and L.~Rigazio.
\newblock Towards deep neural network architectures robust to adversarial
  examples.
\newblock arXiv:1412.5068, 2014.

\bibitem[Hardt et~al.(2016)Hardt, Megiddo, Papadimitriou, and Wootters]{HMPW16}
M.~Hardt, N.~Megiddo, C.~H. Papadimitriou, and M.~Wootters.
\newblock Strategic classification.
\newblock In \emph{Proceedings of the 7th Innovations in Theoretical Computer
  Science Conference (ITCS)}, pages 111--122, 2016.

\bibitem[Immorlica et~al.(2011)Immorlica, Kalai, Lucier, Moitra, Postlewaite,
  and Tennenholtz]{immorlica2011dueling}
N.~Immorlica, A.~T. Kalai, B.~Lucier, A.~Moitra, A.~Postlewaite, and
  M.~Tennenholtz.
\newblock Dueling algorithms.
\newblock In \emph{Proceedings of the 43rd Annual ACM Symposium on Theory of
  Computing (STOC)}, pages 215--224, 2011.

\bibitem[Kearns and Li(1993)]{KL93}
M.~Kearns and M.~Li.
\newblock Learning in the presence of malicious errors.
\newblock \emph{SIAM Journal on Computing}, 22\penalty0 (4):\penalty0 807--837,
  1993.

\bibitem[Koutsoupias and Papadimitriou(1999)]{koutsoupias1999worst}
E.~Koutsoupias and C.~Papadimitriou.
\newblock Worst-case equilibria.
\newblock In \emph{Proceedings of the 16th International Symposium on
  Theoretical Aspects of Computer Science (STACS)}, pages 404--413, 1999.

\bibitem[Laffont and Maskin(1982)]{laffont1982nash}
J.~J. Laffont and E.~Maskin.
\newblock {N}ash and dominant strategy implementation in economic environments.
\newblock \emph{Journal of Mathematical Economics}, 10\penalty0 (1):\penalty0
  17--47, 1982.

\bibitem[Littlestone(1988)]{Litt88}
N.~Littlestone.
\newblock Learning quickly when irrelevant attributes abound: A new
  linear-threshold algorithm.
\newblock \emph{Machine Learning}, 2:\penalty0 285--318, 1988.

\bibitem[Mansour et~al.(2017)Mansour, Slivkins, and Wu]{mansour2017competing}
Y.~Mansour, A.~Slivkins, and Z.~S. Wu.
\newblock Competing bandits: Learning under competition.
\newblock arXiv:1702.08533, 2017.

\bibitem[Meir et~al.(2012)Meir, Procaccia, and Rosenschein]{MPR12}
R.~Meir, A.~D. Procaccia, and J.~S. Rosenschein.
\newblock Algorithms for strategyproof classification.
\newblock \emph{Artificial Intelligence}, 186:\penalty0 123--156, 2012.

\bibitem[Moulin(1980)]{moulin1980strategy}
H.~Moulin.
\newblock On strategy-proofness and single peakedness.
\newblock \emph{Public Choice}, 35\penalty0 (4):\penalty0 437--455, 1980.

\bibitem[Natarajan et~al.(2013)Natarajan, Dhillon, Ravikumar, and
  Tewari]{natarajan2013learning}
N.~Natarajan, I.~S. Dhillon, P.~K. Ravikumar, and A.~Tewari.
\newblock Learning with noisy labels.
\newblock In \emph{Proceedings of the Annual Conference on Neural Information
  Processing Systems (NeurIPS)}, pages 1196--1204, 2013.

\bibitem[Nisan et~al.(2007)Nisan, Roughgarden, Tardos, and
  Vazirani]{nisan2007algorithmic}
N.~Nisan, T.~Roughgarden, E.~Tardos, and V.~V. Vazirani.
\newblock \emph{Algorithmic game theory}.
\newblock Cambridge university press, 2007.

\bibitem[of~California(1990)]{kaggle_california}
State of~California.
\newblock California housing prices, 1990.
\newblock Data retrieved from Kaggle,
  \url{https://www.kaggle.com/camnugent/california-housing-prices}.

\bibitem[Perote and {Perote-Pena}(2004)]{perote2004strategy}
J.~Perote and J.~{Perote-Pena}.
\newblock Strategy-proof estimators for simple regression.
\newblock \emph{Mathematical Social Sciences}, 47\penalty0 (2):\penalty0
  153--176, 2004.

\bibitem[Pugh(2003)]{pugh2003real}
C.~C. Pugh.
\newblock \emph{Real Mathematical Analysis}.
\newblock Undergraduate Texts in Mathematics. Springer New York, 2003.

\bibitem[Renault and Trannoy(2005)]{renault2005protecting}
R.~Renault and A.~Trannoy.
\newblock Protecting minorities through the average voting rule.
\newblock \emph{Journal of Public Economic Theory}, 7\penalty0 (2):\penalty0
  169--199, 2005.

\bibitem[Renault and Trannoy(2011)]{renault2011assessing}
R.~Renault and A.~Trannoy.
\newblock Assessing the extent of strategic manipulation: the average vote
  example.
\newblock \emph{SERIEs}, 2\penalty0 (4):\penalty0 497--513, 2011.

\bibitem[Roberts(1979)]{roberts1979characterization}
K.~Roberts.
\newblock The characterization of implementable choice rules.
\newblock \emph{Aggregation and revelation of preferences}, 12\penalty0
  (2):\penalty0 321--348, 1979.

\bibitem[Rockafellar and Wets(2009)]{rockafellar2009variational}
R.~T. Rockafellar and R.~J.~B. Wets.
\newblock \emph{Variational analysis}.
\newblock Springer Science \& Business Media, 2009.

\bibitem[Yamamura and Kawasaki(2013)]{yamamura2013generalized}
H.~Yamamura and R.~Kawasaki.
\newblock Generalized average rules as stable nash mechanisms to implement
  generalized median rules.
\newblock \emph{Social Choice and Welfare}, 40\penalty0 (3):\penalty0 815--832,
  2013.

\bibitem[Yeh and Hsu(2018)]{yeh2018building}
I.~C. Yeh and T.~K. Hsu.
\newblock Building real estate valuation models with comparative approach
  through case-based reasoning.
\newblock \emph{Applied Soft Computing}, 65:\penalty0 260--271, 2018.

\end{thebibliography}

\end{document}